\theoremstyle{plain}
\newtheorem{thm}{\protect\theoremname}
\theoremstyle{plain}
\theoremstyle{plain}
\newtheorem{cor}[thm]{\protect\corollaryname}
\providecommand{\lemmaname}{Lemma}
\providecommand{\theoremname}{Theorem}
\providecommand{\corollaryname}{Corollary}
\title{Spatial Branch-and-Bound for Computing Multiplayer Nash Equilibrium}
\author {
    Jakub \v{C}ern\'{y}\textsuperscript{\rm 1},
    Shuvomoy Das Gupta\textsuperscript{\rm 2},
    Christian Kroer\textsuperscript{\rm 1}
}
\begin{document}

\maketitle

\begin{abstract}
Equilibria of realistic multiplayer games constitute a key solution concept both in practical applications, such as online advertising auctions and electricity markets, and in analytical frameworks used to study strategic voting in elections or assess policy impacts in integrated assessment models. However, efficiently computing these equilibria requires games to have a carefully designed structure and satisfy numerous restrictions; otherwise, the computational complexity becomes prohibitive. In particular, finding even approximate Nash equilibria in general-sum normal-form games with two or more players is known to be PPAD-complete. Current state-of-the-art algorithms for computing Nash equilibria in multiplayer normal-form games either suffer from poor scalability due to their reliance on non-convex optimization solvers, or lack guarantees of convergence to a true equilibrium. In this paper, we propose a formulation of the Nash equilibrium computation problem as a polynomial complementarity problem and develop a complete and sound spatial branch-and-bound algorithm based on this formulation. We provide a qualitative analysis arguing why one should expect our approach to perform well, and show the relationship between approximate solutions to our formulation and that of computing an approximate Nash equilibrium. Empirical evaluations demonstrate that our algorithm substantially outperforms existing complete methods.
\end{abstract}

\begin{links}
    \link{Code}{github.com/CoffeeAndConvexity/SBnBForNE}
    \link{Extended version}{arxiv.org/abs/2508.10204}
\end{links}

\section{Introduction}

Computing equilibria in games is a long-standing research topic in artificial intelligence and has served as an important benchmark domain. Advances in equilibrium-finding algorithms enabled breakthroughs in AI for recreational settings such as poker, Stratego, and Diplomacy~\cite{bowling2015heads,brown2018superhuman,perolat2022mastering,meta2022human}, and have enabled decision-support systems for security and logistics applications~\citep{pita2008deployed,fang2015security,tambe2011security,jain2010software,cerny2024contested,cerny2025contested}. Deployed systems have helped allocate security resources across the eight terminals of Los Angeles International Airport~\citep{pita2008deployed}, support biodiversity protection over large conservation areas\citep{fang2015security,fang2016deploying,fang2016green}, and assist in screening millions of US air passengers annually\citep{brown2016one}. Related models have also been explored for traffic monitoring, drug interdiction, and cybersecurity~\citep{sinha2018stackelberg}. These applications depend on algorithms capable of computing game-theoretic equilibria at scale. Achieving such scale in practice typically requires modeling restrictions, most commonly reducing interactions to two players and, in many cases, assuming zero-sum utilities, which allows domain-agnostic methods to solve games efficiently.

While some interactions can be reasonably approximated as two-player, e.g., when dependencies between participants are limited or can be decomposed, many real-world settings inherently involve multiple interacting agents and must be modeled as multiplayer games. This is particularly common in computational economics, where various equilibrium concepts underpin deployed systems. Examples include Bayesian Nash equilibria in online advertising auctions~\cite{edelman2007internet}, market-clearing equilibria in electricity markets~\cite{kirschen2018fundamentals}, competitive equilibria in matching platforms such as ride-sharing~\cite{banerjee2015pricing,bimpikis2019spatial}, and Nash equilibria in spectrum auctions and telecommunications markets~\cite{cramton1997fcc,cramton2006combinatorial}. These scenarios are fundamentally multiplayer in nature. Solving them at scale typically requires even stronger modeling assumptions and significant exploitation of domain-specific structure to design tractable algorithms.

Relaxing these assumptions to develop more generally applicable and domain-agnostic algorithms for multiplayer settings remains a central challenge. This is especially true in contexts beyond traditional economic markets, such as multiplayer card games, network-based games like pursuit-evasion or interdiction, or security scenarios involving more than two self-interested entities, such as coalition/team-based games in organizations like the UN or NATO. Even in one of the most studied models -- finite normal-form games -- approximating a general-sum Nash equilibrium becomes PPAD-complete with two or more players, while two-player zero-sum games admit polynomial-time solutions.

Still, computing Nash equilibria in multiplayer normal-form games remains a traditional problem that has been studied for decades, and many classical algorithms exist. Support enumeration methods, such as full support enumeration and vertex enumeration, rely on exhaustively checking combinations of pure strategy supports and solving the resulting best-response constraints, typically expressed as systems of linear inequalities~\cite{porter2008simple,lipton2004nash}. Homotopy-based methods, including the Govindan-Wilson algorithm~\cite{govindan2003global}, simplicial subdivision~\cite{van1987simplicial}, and extended Lemke-type path-following algorithms~\cite{lemke1964equilibrium,mckelvey1996computation,turocy2005dynamic}, trace continuous solution paths from a perturbed game to the original game by following parameterized trajectories of fixed points. Polynomial system solvers approach the problem by encoding the equilibrium conditions, such as the Karush-Kuhn-Tucker conditions for each player’s optimization problem, as systems of polynomial equalities and inequalities, then solving them using Grobner bases, homotopy continuation, or algebraic elimination techniques~\cite{datta2010finding,mckelvey1996computation}. See~\citet{berg2017exclusion} for a more detailed overview.

Classical algorithms often suffer from numerical instability, particularly those based on solving polynomial systems, which can be sensitive to input perturbations and rounding errors. Their computational complexity also grows rapidly with the number of players and strategies, making them difficult to scale and limiting their applicability to relatively small games. Recent complete algorithms instead rely on formulations amenable to highly optimized industrial solvers such as CPLEX and Gurobi. These approaches use techniques such as decomposing multilinear terms into binary variables~\cite{ganzfried2024fast} or adding correlation plan constraints to restrict the feasible solution space~\cite{zhang2023computing}. An advantage of this approach is the ability to incorporate objectives into the formulation; for instance, to compute equilibria that maximize social welfare. Another modern approach is the exclusion method, a tree search-based algorithm that iteratively partitions the continuous solution space~\cite{berg2017exclusion}. Among incomplete methods, one of the most scalable is the gradient-based ADIDAS algorithm, which combines homotopy methods with iterative polymatrix solvers and can handle games with billions of outcomes~\cite{gemp2022adidas}. However, due to the use of biased gradient descent, ADIDAS lacks convergence guarantees to an exact equilibrium and instead offers local convergence only.

\paragraph{Our contributions.} We propose that the Nash equilibrium problem in multiplayer normal-form games should be solved as a polynomial complementarity problem. This formulation allows us to design a sound and complete algorithm based on spatial branch-and-bound. We provide qualitative analysis arguing why the reformulated problem is more tractable, and prove that a suboptimal solution to our reformulation yields an approximate Nash equilibrium, with an explicit bound on the approximation quality. Empirically, we demonstrate that our algorithm not only outperforms existing complete methods, but also remains competitive with the leading incomplete method on mid-sized games when run with early termination, while offering better-quality solutions. Large games still remain out of reach primarily due to the size of the utility matrix, which cannot be explicitly constructed. However, because our algorithm can serve as a subgame solver within incremental oracle-based methods, we see a path toward improving scalability in future work.
\section{Preliminaries}

We represent a multiplayer normal-form game as a tuple $G = (N, A, u)$, where $N$ denotes a finite set of players with $n = |N|$. For each player $i \in N$, let $A_i$ be the finite set of available actions, and define the joint action space as the Cartesian product $A = \prod_{i \in N} A_i$. A real-valued utility function $u_i : A \rightarrow \mathbb{R}$ specifies the payoff for player $i$ under each action profile $a \in A$, and we denote by $\minu_i$ and $\maxu_i$ the minimal and maximal values of $u_i$, respectively. Mixed strategies are denoted by $\Delta = \prod_{i \in N} \Delta_i$, where each $\Delta_i$ is the probability simplex over $A_i$. For any tuple indexed over players, we use the subscript $-i$ to indicate the components corresponding to all players except $i$; for example, for $a = (a_1, \dots, a_n) \in A$, we write $a_{-i} = (a_1, \dots, a_n) \setminus \{a_i\}$, and for a mixed strategy profile $\delta \in \Delta$, we write $\delta_{-i}(a_{-i}) = \prod_{j \in N, j \neq i} \delta_j(a_j)$, where $\delta_j \in \Delta_j$. The expected utility of player $i$ under a mixed strategy profile $\delta \in \Delta$ is given by
\begin{equation*}
u_i(\delta) = \sum_{a \in A} \left( \prod_{j \in N} \delta_j(a_j) \right) u_i(a),
\end{equation*}
and the expected utility of player $i$ when they play a specific action $a_i \in A_i$ against $\delta_{-i}$ is
\begin{equation*}
u_i(a_i, \delta_{-i}) = \sum_{a_{-i} \in A_{-i}} \delta_{-i}(a_{-i}) u_i(a_i, a_{-i}).
\end{equation*}
A \emph{Nash equilibrium} is a strategy profile $\delta = (\delta_i)_{i \in N} \in \Delta$ such that no player can improve their expected utility by unilaterally deviating from their strategy: formally,
\begin{equation}\tag{NE}
u_i(\delta_i, \delta_{-i}) \geq u_i(\delta_i', \delta_{-i}) \quad \forall\delta_i' \in \Delta_i.
\end{equation}
Finding a Nash equilibrium can be formulated as the following mixed-integer polynomial feasibility problem~\cite{shoham2008multiagent}:
\begin{equation}
\left(
\begin{array}{c}
\begin{aligned}
v_i &= s_i(a_i) + u_i(a_i, \delta_{-i})              &&\forall i, a_i \\
\delta_i(a_i) &\leq z_i(a_i)                         &&\forall i, a_i \\
s_i(a_i) &\leq (1 - z_i(a_i))(\maxu_i - \minu_i)     &&\forall i, a_i \\
\smashoperator{\sum_{a_i\in A_i}} \delta_i(a_i) &= 1                        &&\forall i \\
0 \leq \delta_i(a_i) &\leq 1                         &&\forall i, a_i \\
0 \leq s_i(a_i) &\leq \maxu_i - \minu_i              &&\forall i, a_i \\
\minu_i \leq v_i &\leq \maxu_i                       &&\forall i \\
z_i(a_i) &\in \{0,1\}                                &&\forall i, a_i
\end{aligned}
\end{array}
\right)
\tag{$\mathcal{P}$}\label{eq:miqcp-NE}
\end{equation}
where $\{\delta_i(a_i)\}_{i \in N, a_i \in A_i}$, $\{v_i\}_{i \in N}$, $\{z_i(a_i)\}_{i \in N, a_i \in A_i}$, $\{s_i(a_i)\}_{i \in N, a_i \in A_i}$ are the decision variables. Moreover, we call a strategy profile $\delta\in\Delta$ an $\epsilon$-equilibrium  if for all $i\in N$
\begin{equation}\tag{$\epsilon$-NE}\label{eps-nash}
u_i(\delta_i, \delta_{-i}) \geq u_i(\delta_i', \delta_{-i}) - \epsilon \quad \delta_i' \in \Delta_i.
\end{equation}
\section{Our algorithm}

This section is organized as follows. First, we propose a
continuous formulation of \eqref{eq:miqcp-NE} as a polynomial
feasibility problem. Then, we discuss why a spatial branch-and-bound algorithm working with this continuous formulation can be significantly more tractable numerically than the original mixed-integer formulation. Finally,
we present a customized spatial branch-and-bound algorithm that exploits
the underlying structure of our continuous formulation to solve it
more efficiently.

\subsection{Equivalent continuous formulation of \eqref{eq:miqcp-NE}}

The equivalent continuous formulation of \eqref{eq:miqcp-NE} is as
follows: 
\begin{equation}
\left(\begin{array}{c}
\begin{aligned}v_{i}-u_{i}(a_{i},\delta_{-i}) & \geq0 &  & \forall i,a_{i}\\
\delta_{i}(a_{i})\left(v_{i}-u_{i}(a_{i},\delta_{-i})\right) & =0 &  & \forall i,a_{i}\\
\sum_{a_{i}\in A_{i}}\delta_{i}(a_{i}) & =1 &  & \forall i\\
0\leq\delta_{i}(a_{i}) & \leq1 &  & \forall i,a_{i}\\
\minu_{i}\leq v_{i} & \leq\maxu_{i} &  & \forall i
\end{aligned}
\end{array}\right)\tag{\ensuremath{\mathcal{Q}}}\label{eq:cont-NE}
\end{equation}
where the decision variables are now $\{\delta_{i}(a_{i})\}_{i\in N,a_{i}\in A_{i}}$,
$\{v_{i}\}_{i\in N}$, which is a strict subset of the decision variables
of \eqref{eq:miqcp-NE}. The second constraint, which states that
the product of two nonnegative terms must be zero, is known as a complementarity
constraint. We now show that the continuous formulation \eqref{eq:cont-NE}
and the mixed-integer formulation \eqref{eq:miqcp-NE} are equivalent
to each other.

\begin{proposition}\label{prop:P_eq_Q} The continuous formulation
\eqref{eq:cont-NE} and the mixed-integer polynomial formulation \eqref{eq:miqcp-NE}
are equivalent. \end{proposition}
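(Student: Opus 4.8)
The plan is to prove equivalence in the concrete sense that a strategy profile $\delta$ occurs in some feasible point of \eqref{eq:miqcp-NE} if and only if it occurs in some feasible point of \eqref{eq:cont-NE} (and, via the known reduction \cite{shoham2008multiagent}, exactly when $\delta$ is a Nash equilibrium of $G$). I would establish this by an explicit two-way correspondence between feasible points, handling the two directions separately.

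For the direction \eqref{eq:miqcp-NE} $\Rightarrow$ \eqref{eq:cont-NE}, I take a feasible point $(\delta,v,z,s)$ of \eqref{eq:miqcp-NE} and show $(\delta,v)$ is feasible for \eqref{eq:cont-NE}. The first constraint of \eqref{eq:miqcp-NE} together with $s_i(a_i)\ge 0$ gives $v_i-u_i(a_i,\delta_{-i})=s_i(a_i)\ge 0$, which is the first constraint of \eqref{eq:cont-NE}. For the complementarity constraint I use $z_i(a_i)$ as a switch: if $\delta_i(a_i)>0$, then $z_i(a_i)\ge \delta_i(a_i)>0$ forces $z_i(a_i)=1$, so $s_i(a_i)\le (1-z_i(a_i))(\maxu_i-\minu_i)=0$, hence $s_i(a_i)=0$ and $v_i-u_i(a_i,\delta_{-i})=0$; thus $\delta_i(a_i)\bigl(v_i-u_i(a_i,\delta_{-i})\bigr)=0$ in all cases. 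The simplex and box constraints transfer verbatim.

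For the converse \eqref{eq:cont-NE} $\Rightarrow$ \eqref{eq:miqcp-NE}, given $(\delta,v)$ feasible for \eqref{eq:cont-NE} I reconstruct the auxiliary variables by $s_i(a_i):=v_i-u_i(a_i,\delta_{-i})$ (which is $\ge 0$ by the first constraint of \eqref{eq:cont-NE}) and $z_i(a_i):=1$ if $\delta_i(a_i)>0$, and $z_i(a_i):=0$ otherwise. Then $v_i=s_i(a_i)+u_i(a_i,\delta_{-i})$ and $\delta_i(a_i)\le z_i(a_i)$ hold by construction, and the complementarity constraint of \eqref{eq:cont-NE} ensures that whenever $z_i(a_i)=1$ we have $s_i(a_i)=0$, so $s_i(a_i)\le(1-z_i(a_i))(\maxu_i-\minu_i)$ holds in both the $z_i(a_i)=1$ and $z_i(a_i)=0$ cases. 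The one place that needs an argument is the upper bound $s_i(a_i)\le\maxu_i-\minu_i$: here I observe that $\delta_{-i}$ is a genuine distribution over $A_{-i}$ — its weights are products of the simplex weights of the $\delta_j$, $j\ne i$, and sum to $1$ — so $u_i(a_i,\delta_{-i})$ is a convex combination of values $u_i(a_i,a_{-i})\ge\minu_i$, giving $u_i(a_i,\delta_{-i})\ge\minu_i$ and therefore $s_i(a_i)=v_i-u_i(a_i,\delta_{-i})\le\maxu_i-\minu_i$.

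Putting the two directions together, the $\delta$-projections of the feasible sets coincide, which is the claimed equivalence. I expect the only genuinely nonroutine step to be this last bound — checking that the expected payoffs $u_i(a_i,\delta_{-i})$ land in $[\minu_i,\maxu_i]$ using only the simplex constraints — together with being careful that the binary variable $z_i(a_i)$ has harmless slack (it may be set to either value) on actions $a_i$ with $\delta_i(a_i)=0$ and $s_i(a_i)=0$; the rest is bookkeeping.
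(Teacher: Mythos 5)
Your proposal is correct and follows essentially the same route as the paper's proof: the same projection $(\delta,v,z,s)\mapsto(\delta,v)$ in one direction, the same reconstruction $s_i(a_i):=v_i-u_i(a_i,\delta_{-i})$, $z_i(a_i):=\mathbf{1}[\delta_i(a_i)>0]$ in the other, with the same case analysis on $\delta_i(a_i)$ being zero or positive. The only difference is cosmetic: you spell out the convex-combination argument for $u_i(a_i,\delta_{-i})\ge\minu_i$, which the paper simply asserts.
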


\begin{proof} First, assume we have a feasible solution $(\delta,v,z,s)$
for \eqref{eq:miqcp-NE}, then we can show that $(\delta,v)$ is a
feasible solution to \eqref{eq:cont-NE}. The third and fourth constraints
$\sum_{a_{i}\in A_{i}}\delta_{i}(a_{i})=1$, $0\le\delta_{i}(a_{i})\le1$,
and $\minu_{i}\le v_{i}\le\maxu_{i}$ are identical in both formulations,
so $(\delta,v)$ will satisfy those trivially. For the first constraint
of \eqref{eq:cont-NE}, note that from \eqref{eq:miqcp-NE}, we have
$s_{i}(a_{i})=v_{i}-u_{i}(a_{i},\delta_{-i})$, and $s_{i}(a_{i})\ge0$,
which implies $v_{i}-u_{i}(a_{i},\delta_{-i})\geq0$. To show that
the second constraint of \eqref{eq:cont-NE} will be satisfied by
$(\delta,v)$, consider two cases: (i) $\delta_{i}(a_{i})=0$, where
the second constraint is trivially satisfied (ii) $\delta_{i}(a_{i})>0$,
where $\delta_{i}(a_{i})\le z_{i}(a_{i})$ from \eqref{eq:miqcp-NE}
forces $z_{i}(a_{i})=1$ implying $s_{i}(a_{i})\le(1-z_{i}(a_{i}))(\maxu_{i}-\minu_{i})=0$,
and because $s_{i}(a_{i})\ge0$ in \eqref{eq:miqcp-NE}, we must have
$s_{i}(a_{i})=v_{i}-u_{i}(a_{i},\delta_{-i})=0$, thus satisfying
the second constraint of \eqref{eq:cont-NE} as well. Hence, $(\delta,v)$
is a feasible solution to \eqref{eq:cont-NE}.

Next, assume we have a feasible solution $(\delta,v)$ for the continuous
problem \eqref{eq:cont-NE}. Then we can construct some $(z,s)$ such
that $(\delta,v,z,s)$ is a feasible solution to \eqref{eq:miqcp-NE},
which we can show as follows. Define $s_{i}(a_{i}):=v_{i}-u_{i}(a_{i},\delta_{-i})$
(thus automatically satisfying the first constraint of \eqref{eq:miqcp-NE})
and $z_{i}(a_{i})\coloneqq1$ if $\delta_{i}(a_{i})>0$ and $z_{i}(a_{i})\coloneqq0$
if $\delta_{i}(a_{i})=0$ (which satisfies the last constraint of
\eqref{eq:miqcp-NE} via construction). For the second constraint
of \eqref{eq:miqcp-NE}, we consider two cases: (i) if $\delta_{i}(a_{i})>0$,
then $z_{i}(a_{i})=1$ by construction and because $\delta_{i}(a_{i})\le1$,
it implies $\delta_{i}(a_{i})\leq z_{i}(a_{i})$ holds, (ii) If $\delta_{i}(a_{i})=0$,
then $z_{i}(a_{i})=0$ by construction satisfying $\delta_{i}(a_{i})\leq z_{i}(a_{i})$.
The constraint $s_{i}(a_{i})\ge0$ is guaranteed by our construction
and $s_{i}(a_{i})\le\maxu_{i}-\minu_{i}$ is always true as $v_{i}\le\maxu_{i}$
and $u_{i}(a_{i},\delta_{-i})\ge\minu_{i}$. For $s_{i}(a_{i})\leq(1-z_{i}(a_{i}))(\maxu_{i}-\minu_{i})$,
we consider two cases again: (i) if $\delta_{i}(a_{i})>0$, then $z_{i}(a_{i})=1$
by construction and the second constraint of \eqref{eq:cont-NE} implies
$v_{i}-u_{i}(a_{i},\delta_{-i})=0$ in this case so $s_{i}(a_{i})=0$
thus $0=s_{i}(a_{i})\leq(1-z_{i}(a_{i}))(\maxu_{i}-\minu_{i})=0$,
(ii) If $\delta_{i}(a_{i})=0$, then $z_{i}(a_{i})=0$ by construction,
leading to $s_{i}(a_{i})\le(1-z_{i}(a_{i}))(\maxu_{i}-\minu_{i})=\maxu_{i}-\minu_{i}$,
and we showed earlier $s_{i}(a_{i})\le\maxu_{i}-\minu_{i}$ is true.
Hence, any solution to \eqref{eq:cont-NE} implies a solution to \eqref{eq:miqcp-NE}.
\end{proof}

To solve \eqref{eq:cont-NE} using the spatial branch-and-bound algorithm
effectively, we work with the following penalized version of \eqref{eq:cont-NE},
where the complementarity constraint of \eqref{eq:cont-NE} is penalized
through the $\infty$-norm via the epigraph approach \cite[page 134]{boyd2004convex}
as:

\begin{equation}
\left(\begin{aligned}\text{min}\quad & \varpi\\
\text{s.t.}\quad & \varpi\geq\delta_{i}(a_{i})\left(v_{i}-u_{i}(a_{i},\delta_{-i})\right) &  & \forall i,a_{i}\\
 & \varpi\geq-\delta_{i}(a_{i})\left(v_{i}-u_{i}(a_{i},\delta_{-i})\right) &  & \forall i,a_{i}\\
 & v_{i}-u_{i}(a_{i},\delta_{-i})\geq0 &  & \forall i,a_{i}\\
 & \sum_{a_{i}\in A_{i}}\delta_{i}(a_{i})=1 &  & \forall i\\
 & 0\leq\delta_{i}(a_{i})\leq1 &  & \forall i,a_{i}\\
 & \minu_{i}\leq v_{i}\leq\maxu_{i} &  & \forall i
\end{aligned}
\right)\tag{\ensuremath{\mathcal{R}}}\label{eq:cont-NE-penal}
\end{equation}
where the decision variables are $\{\delta_{i}(a_{i})\}_{i\in N,a_{i}\in A_{i}}$,
$\{v_{i}\}_{i\in N}$ and the scalar epigraph variable $\varpi$.
We now show that any solution to \eqref{eq:cont-NE} is also a solution
to \eqref{eq:cont-NE-penal} and vice versa.

\begin{proposition}\label{prop:Q_eq_R} Problems \eqref{eq:cont-NE}
and \eqref{eq:cont-NE-penal} are equivalent. \end{proposition}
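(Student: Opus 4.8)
The plan is to show that $(\delta,v)$ is feasible for \eqref{eq:cont-NE} if and only if $(\delta,v,\varpi)=(\delta,v,0)$ is an optimal solution of \eqref{eq:cont-NE-penal}, and in particular that the optimal value of \eqref{eq:cont-NE-penal} is zero; this is the precise sense in which the two problems are equivalent. First I would observe that the last four constraints of \eqref{eq:cont-NE-penal} --- the best-response inequality $v_i-u_i(a_i,\delta_{-i})\geq 0$, the simplex constraint $\sum_{a_i}\delta_i(a_i)=1$, the box $0\leq\delta_i(a_i)\leq 1$, and the box $\minu_i\leq v_i\leq\maxu_i$ --- are exactly constraints one, three, four, and five of \eqref{eq:cont-NE}, so the only structural difference is that the complementarity equality of \eqref{eq:cont-NE} has been replaced by the pair of epigraph inequalities on $\varpi$. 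Those two inequalities together say $\varpi\geq|\delta_i(a_i)(v_i-u_i(a_i,\delta_{-i}))|$ for all $i,a_i$; but on the common feasible set $\delta_i(a_i)\geq 0$ and $v_i-u_i(a_i,\delta_{-i})\geq 0$, so each product is nonnegative and hence $\varpi\geq\max_{i,a_i}\delta_i(a_i)(v_i-u_i(a_i,\delta_{-i}))\geq 0$ at every feasible point. Consequently the optimal value of \eqref{eq:cont-NE-penal} is nonnegative, and for any fixed $(\delta,v)$ satisfying the shared constraints the best choice of the epigraph variable is $\varpi=\max_{i,a_i}\delta_i(a_i)(v_i-u_i(a_i,\delta_{-i}))$, so \eqref{eq:cont-NE-penal} is precisely the problem of minimizing this maximal complementarity residual over the shared feasible set.

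Next I would prove the two directions. If $(\delta,v)$ is feasible for \eqref{eq:cont-NE}, then every complementarity product vanishes, so $(\delta,v,0)$ is feasible for \eqref{eq:cont-NE-penal} with objective value $0$; together with the bound $\varpi\geq 0$ established above, this shows the optimal value of \eqref{eq:cont-NE-penal} equals $0$ and is attained at $(\delta,v,0)$. Conversely, suppose $(\delta,v,\varpi)$ is an optimal solution of \eqref{eq:cont-NE-penal}. A Nash equilibrium always exists, so \eqref{eq:miqcp-NE} is feasible and hence, by Proposition~\ref{prop:P_eq_Q}, so is \eqref{eq:cont-NE}; by the forward direction the optimal value of \eqref{eq:cont-NE-penal} is therefore $0$, so optimality forces $\varpi=0$. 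Then $0=\varpi\geq\delta_i(a_i)(v_i-u_i(a_i,\delta_{-i}))\geq 0$ for all $i,a_i$, which forces each complementarity product to be $0$; since $(\delta,v)$ also satisfies all the shared constraints, $(\delta,v)$ is feasible for \eqref{eq:cont-NE}. This yields the claimed correspondence in both directions.

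I do not expect a genuine obstacle here: the argument is the standard exact-penalty / epigraph reformulation, and the only points requiring care are (i) the sign argument --- the nonnegativity of each complementarity product, which crucially uses the retained best-response inequality $v_i-u_i(a_i,\delta_{-i})\geq 0$ and lets us drop the absolute value --- and (ii) the appeal to feasibility of \eqref{eq:cont-NE}, needed so that the optimal value of \eqref{eq:cont-NE-penal} is genuinely $0$ rather than merely nonnegative. A minor technical remark I would add is that \eqref{eq:cont-NE-penal} does attain its optimum: the $(\delta,v)$-component of its feasible set is closed (polynomial constraints) and bounded ($\delta_i(a_i)\in[0,1]$, $v_i\in[\minu_i,\maxu_i]$), and the reduced objective $(\delta,v)\mapsto\max_{i,a_i}\delta_i(a_i)(v_i-u_i(a_i,\delta_{-i}))$ is continuous, so a minimizer exists --- though, as noted, this also follows immediately once we know the value $0$ is achieved by any equilibrium-encoding triple.
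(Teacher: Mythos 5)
Your proof is correct and follows essentially the same route as the paper's: you observe that \eqref{eq:cont-NE-penal} minimizes the nonnegative maximal complementarity residual over the shared constraint set (the retained inequality $v_i-u_i(a_i,\delta_{-i})\geq 0$ making each product nonnegative), invoke existence of a Nash equilibrium to conclude the optimal value is zero, and then argue both directions from $\varpi=0$. Your added remarks on attainment and the slight detour through Proposition~\ref{prop:P_eq_Q} are harmless refinements of the same argument, not a different approach.
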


\begin{proof}

Note that the constraints of \eqref{eq:cont-NE-penal} are the same
as \eqref{eq:cont-NE}, except the complementarity constraint, which
is penalized via the epigraph approach \cite[page 134]{boyd2004convex}
with objective effectively minimizing the nonnegative objective $\max_{i\in N,a_{i}\in A_{i}}|\delta_{i}(a_{i})\left(v_{i}-u_{i}(a_{i},\delta_{-i})\right)|$.
Because every finite game admits at least one Nash equilibrium \cite{nash1951non},
there exists an optimal solution to \eqref{eq:cont-NE}, and by construction,
such a Nash equilibrium will be a feasible solution to \eqref{eq:cont-NE-penal}
with objective value $0$, hence it will also be the globally optimal
solution to \eqref{eq:cont-NE-penal} because $0$ is the lowest possible
value the objective of \eqref{eq:cont-NE-penal}.

Similarly, due to \cite{nash1951non}, an optimal solution to \eqref{eq:cont-NE-penal}
will have objective value $0$ and such a solution will satisfy all
the constraints shared by \eqref{eq:cont-NE-penal} and \eqref{eq:cont-NE}.
Because the objective $\max_{i\in N,a_{i}\in A_{i}}|\delta_{i}(a_{i})\left(v_{i}-u_{i}(a_{i},\delta_{-i})\right)|$
being zero is equivalent to each complementarity constraint being
satisfied, a solution to \eqref{eq:cont-NE-penal} thus satisfies
every constraints of \eqref{eq:cont-NE}. \end{proof} Proposition
\ref{prop:Q_eq_R} leads to the following corollary: \begin{cor}
\label{cor:opt_val_R}The optimal objective value of \eqref{eq:cont-NE-penal}
is zero and any optimal solution to \eqref{eq:cont-NE-penal} is an
exact Nash equilibrium. \end{cor}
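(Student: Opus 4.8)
The plan is to obtain Corollary~\ref{cor:opt_val_R} as an immediate consequence of Proposition~\ref{prop:Q_eq_R} together with Nash's existence theorem. First I would observe that in \eqref{eq:cont-NE-penal} the objective is the scalar $\varpi$, and the first two constraint families say precisely that $\varpi \geq |\delta_i(a_i)(v_i - u_i(a_i,\delta_{-i}))|$ for every $i$ and every $a_i$; in particular $\varpi \geq 0$ at any feasible point, so $0$ is a valid lower bound on the optimal value.

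Next I would exhibit a feasible point attaining this bound. By Nash's theorem~\cite{nash1951non} the game has an equilibrium $\delta^{\star}$; set $v_i^{\star} := u_i(\delta_i^{\star},\delta_{-i}^{\star})$ and $\varpi^{\star} := 0$. The simplex and box constraints hold trivially; the best-response inequality $v_i^{\star} - u_i(a_i,\delta_{-i}^{\star}) \geq 0$ holds because no pure deviation improves on $\delta_i^{\star}$; and each complementarity product vanishes since $\delta_i^{\star}(a_i) > 0$ forces $a_i$ to be a best response, i.e., $u_i(a_i,\delta_{-i}^{\star}) = v_i^{\star}$. Hence $(\delta^{\star},v^{\star},\varpi^{\star})$ is feasible with objective $0$, so the optimal value of \eqref{eq:cont-NE-penal} equals $0$. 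This is exactly what the proof of Proposition~\ref{prop:Q_eq_R} already establishes, so I would simply cite it rather than repeat the argument.

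Finally, for the second assertion, let $(\delta,v,\varpi)$ be any optimal solution of \eqref{eq:cont-NE-penal}. By the first part $\varpi = 0$, and then the first two constraint families force $\delta_i(a_i)(v_i - u_i(a_i,\delta_{-i})) = 0$ for all $i,a_i$; together with the shared constraints this makes $(\delta,v)$ feasible for \eqref{eq:cont-NE}. It remains to read off the Nash property: summing the equalities $\delta_i(a_i)(v_i - u_i(a_i,\delta_{-i})) = 0$ over $a_i \in A_i$ and using $\sum_{a_i \in A_i}\delta_i(a_i) = 1$ gives $v_i = \sum_{a_i \in A_i}\delta_i(a_i)\,u_i(a_i,\delta_{-i}) = u_i(\delta_i,\delta_{-i})$, while $v_i \geq u_i(a_i,\delta_{-i})$ for every pure action $a_i$ yields, by taking convex combinations, $u_i(\delta_i,\delta_{-i}) = v_i \geq u_i(\delta_i',\delta_{-i})$ for every $\delta_i' \in \Delta_i$; hence $\delta$ is an exact Nash equilibrium. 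Alternatively one could route this last step through Proposition~\ref{prop:P_eq_Q} and the known correctness of \eqref{eq:miqcp-NE}~\cite{shoham2008multiagent}. I do not anticipate a genuine obstacle, since the corollary just repackages Proposition~\ref{prop:Q_eq_R}; the only point deserving a careful line is the passage from ``no profitable pure deviation'' to ``no profitable mixed deviation,'' which is the usual linearity-of-expectation argument.
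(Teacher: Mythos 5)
Your proposal is correct and follows essentially the same route as the paper: the corollary is obtained from Proposition~\ref{prop:Q_eq_R}, using Nash's existence theorem to show the optimal value is zero and then reading the equilibrium property off the complementarity and best-response constraints. The only difference is that you spell out explicitly (via summing the complementarity equalities and linearity of expectation) the final step that the paper delegates to the equivalences \eqref{eq:cont-NE-penal}$\leftrightarrow$\eqref{eq:cont-NE}$\leftrightarrow$\eqref{eq:miqcp-NE}, which is a harmless and correct elaboration.
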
 %
Moreover, if any algorithm solving \eqref{eq:cont-NE-penal} terminates
with a feasible solution having a non-zero objective value, we have
the following guarantee regarding the solution being an $\epsilon$-Nash
equilibrium. 

\begin{proposition}\label{prop:approx} A candidate solution $\delta$
that satisfies all the constraints in \eqref{eq:cont-NE-penal} with
objective value $\varpi$ constitutes an $\epsilon$-Nash equilibrium
with $\epsilon=\left(\varpi\times\max_{i\in N}|A_{i}|\right)$. \end{proposition}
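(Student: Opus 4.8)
The plan is, for each player $i$, to sandwich the quantities appearing in the \eqref{eps-nash} condition between the auxiliary value $v_i$ and $v_i$ shifted by a small amount controlled by $\varpi$. Concretely, I would first show that $v_i$ upper-bounds every deviation payoff, then show that the payoff actually obtained under $\delta_i$ is at least $v_i - \varpi|A_i|$, and finally combine the two inequalities.

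\emph{Step 1 (best response is at most $v_i$).} Using the third family of constraints of \eqref{eq:cont-NE-penal}, namely $v_i - u_i(a_i,\delta_{-i}) \geq 0$ for all $a_i$, I would argue that for any deviation $\delta_i' \in \Delta_i$,
\[
u_i(\delta_i',\delta_{-i}) = \sum_{a_i \in A_i} \delta_i'(a_i)\, u_i(a_i,\delta_{-i}) \;\leq\; \sum_{a_i \in A_i} \delta_i'(a_i)\, v_i \;=\; v_i,
\]
since $\delta_i'$ is a distribution. Thus $v_i$ dominates the value of every unilateral deviation.

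\emph{Step 2 (achieved payoff is at least $v_i - \varpi|A_i|$).} Writing $u_i(a_i,\delta_{-i}) = v_i - \bigl(v_i - u_i(a_i,\delta_{-i})\bigr)$ and using $\sum_{a_i}\delta_i(a_i)=1$, I would expand
\[
u_i(\delta_i,\delta_{-i}) = \sum_{a_i \in A_i} \delta_i(a_i)\, u_i(a_i,\delta_{-i}) = v_i - \sum_{a_i \in A_i} \delta_i(a_i)\bigl(v_i - u_i(a_i,\delta_{-i})\bigr).
\]
Each summand is nonnegative (a product of two nonnegative quantities, by constraints 3--4 of \eqref{eq:cont-NE-penal}), and by the first two constraints of \eqref{eq:cont-NE-penal} it satisfies $\delta_i(a_i)\bigl(v_i - u_i(a_i,\delta_{-i})\bigr) \leq \varpi$. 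Summing over the $|A_i|$ actions gives $u_i(\delta_i,\delta_{-i}) \geq v_i - \varpi|A_i|$.

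\emph{Step 3 (combine).} Chaining Steps 1 and 2, for every $\delta_i' \in \Delta_i$,
\[
u_i(\delta_i,\delta_{-i}) \;\geq\; v_i - \varpi|A_i| \;\geq\; u_i(\delta_i',\delta_{-i}) - \varpi|A_i| \;\geq\; u_i(\delta_i',\delta_{-i}) - \varpi\,\max_{j\in N}|A_j|,
\]
which is exactly the \eqref{eps-nash} condition with $\epsilon = \varpi\,\max_{i\in N}|A_i|$. I do not expect a genuine obstacle here: the argument is sign bookkeeping together with the penalty constraints. The only point to handle carefully is that the per-action bound $\delta_i(a_i)(v_i - u_i(a_i,\delta_{-i})) \leq \varpi$ should be read off the absolute-value (epigraph) form of the first two constraints; since the bracketed term is already nonnegative, the bound is immediate, but it is worth stating explicitly so the sum in Step 2 is justified.
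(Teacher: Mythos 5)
Your proof is correct and follows essentially the same route as the paper's: bound the best-response value by $v_i$ via the constraint $v_i - u_i(a_i,\delta_{-i}) \geq 0$, rewrite $u_i(\delta_i,\delta_{-i})$ as $v_i - \sum_{a_i}\delta_i(a_i)\bigl(v_i - u_i(a_i,\delta_{-i})\bigr)$ using $\sum_{a_i}\delta_i(a_i)=1$, and bound each complementarity term by $\varpi$ from the epigraph constraints, giving the $\varpi|A_i|$ slack. The only cosmetic difference is that the paper bounds the deviation gain $\epsilon_i = v_i^* - u_i(\delta)$ in a single chain with the pure best response $v_i^*$, whereas you split the argument into two inequalities over arbitrary mixed deviations; the content is identical.
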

\begin{proof} For each player $i$, let $v_{i}^{*}$ be their best-response
value against the profile $\delta_{-i}$, i.e., $v_{i}^{*}=\max_{a_{i}\in A_{i}}u_{i}(a_{i},\delta_{-i})$.
The utility deviation improvement $\epsilon_{i}$ is then equal to
\[
\begin{aligned}\epsilon_{i} & =v_{i}^{*}-u_{i}(\delta)=v_{i}^{*}-\sum_{a_{i}\in A_{i}}\delta_{i}(a_{i})u_{i}(a_{i},\delta_{-i})\\
 & \leq v_{i}-\sum_{a_{i}\in A_{i}}\delta_{i}(a_{i})u_{i}(a_{i},\delta_{-i})\\
 & =\sum_{a_{i}\in A_{i}}\delta_{i}(a_{i})(v_{i}-u_{i}(a_{i},\delta_{-i}))\leq\varpi|A_{i}|,
\end{aligned}
\]
where $v_{i}^{*}\leq v_{i}$ follows from the third constraint in
\eqref{eq:cont-NE-penal}, and the final inequality uses the first
two constraints in \eqref{eq:cont-NE-penal}. Moreover, we use the
fact that $\sum_{a_{i}\in A_{i}}\delta_{i}(a_{i})=1$. The approximation
parameter of the equilibrium $\epsilon$ is then equal to $\epsilon=\max_{i\in N}\epsilon_{i}=\varpi\max_{i\in N}|A_{i}|$.
\end{proof} 

In the next section, we
discuss why for an effective spatial branch-and-bound implementation,
it is more convenient to work with \eqref{eq:cont-NE-penal} over
\eqref{eq:miqcp-NE}.

\subsection{Tractability of \eqref{eq:cont-NE-penal} over \eqref{eq:miqcp-NE}}

For a spatial branch-and-bound algorithm, working with the penalized continuous formulation \eqref{eq:cont-NE-penal} can be significantly more tractable than with the mixed-integer formulation \eqref{eq:miqcp-NE}.
This advantage stems from three primary factors: the dimensionality
of the search space, the complexity of the branching scheme, and the
ease of generating high-quality feasible solutions. To explain these
advantages clearly, we first provide a brief overview of how a standard
spatial branch-and-bound algorithm works, for more details we refer
the interested readers to \cite{liberti2008introduction,horst2013global,locatelli2013global,Gurobi91}.

\paragraph{Brief review of the standard spatial branch-and-bound algorithm.}

Consider a nonconvex mixed-integer polynomial minimization problem
where the objective and constraint functions are polynomials, and
the decision variables are a mix of continuous and binary types. We
assume that the problem's feasible region is nonempty and bounded
and that it admits a finite optimal value $p^{\star}$, as is the
case for our problem. The spatial branch-and-bound algorithm employs
a divide-and-conquer strategy to solve the problem by recursively
partitioning the domain to create a search tree. The process begins
with a \emph{presolve} phase, where a linear relaxation of the original
problem is solved to obtain valid bounds on the continuous variables.
Following this, the algorithm dynamically partitions the feasible
region into a finite collection of subregions, each corresponding
to a subproblem. By definition, the smallest among the optimal values
of these subproblems is the optimal value of the original problem.
The speed of the algorithm is essentially controlled by the efficiency
of solving these subproblems and, if necessary, branching them further
into smaller subproblems, which we discuss next.

While solving a particular subproblem---which is also a nonconvex
polynomial minimization problem, albeit over a smaller region---can
be as hard as solving the original problem, computing lower and upper
bounds for that subproblem via linear or convex relaxations is much
easier. At the root node of the spatial branch-and-bound tree, a linear
relaxation of the original problem is constructed and solved to yield
a lower bound on $p^{\star}$, denoted by $\underline{p}^{\star}$.
The tighter this relaxation, the closer $\underline{p}^{\star}$ is
to $p^{\star}$. In the absence of any special structure, the initial
upper bound on $p^{\star}$ is set to $\overline{p}^{\star}=\infty$.
However, a user can \emph{warm-start} the algorithm by providing a
known feasible solution, which can establish a nontrivial initial
$\overline{p}^{\star}$. Efficient warm-starting procedures that exploit
the problem structure can massively accelerate branch-and-bound solvers,
a technique we will leverage for problem \eqref{eq:cont-NE-penal}
(described in the next section). Note that a polynomial feasibility
problem, such as finding a Nash Equilibrium in \eqref{eq:miqcp-NE},
cannot be warm-started, as providing a feasible solution is equivalent
to having already solved it. Throughout its execution, the algorithm
updates $\underline{p}^{\star}$ and $\overline{p}^{\star}$ to progressively
reduce the optimality gap, $\overline{p}^{\star}-\underline{p}^{\star}$
and the algorithm terminates when $\underline{p}^{\star}$ matches
$\overline{p}^{\star}$, having found a globally optimal solution.

The algorithm reduces this gap by iteratively processing subproblems.
It selects an active subregion from the list of unprocessed nodes
in the search tree (typically guided by heuristics) and constructs
two convex (usually linear) minimization subproblems. These convex
formulations are created using convex lower and upper envelopes \cite{mccormick1976computability}
for the nonlinear terms, whereas the linear constraints and terms
remain unchanged. Additionally, valid linear inequalities (cuts) can
be added to strengthen these convex relaxations \cite{sherali1990hierarchy,padberg1989boolean,sherali2002enhancing}.
Solving these convex subproblems provides valid lower and upper bounds
on the optimal value for the active subregion, leading to one of the
following three possibilities: 
\begin{itemize}
\item \textbf{Possibility 1: pruning.} If the convex optimization problem
associated with the lower bound is infeasible or its objective value
is greater than the current global upper bound $\overline{p}^{\star}$,
then the subregion cannot contain the global optimum. Consequently,
this subregion is discarded, or \emph{pruned}, from the search tree. 
\item \textbf{Possibility 2: fathoming by optimality.} For the subproblem,
if its lower bound matches its upper bound%
, we have found this subproblem's optimal solution without solving
the original nonconvex problem on that subregion. This local solution
is also a feasible solution to the main problem. If its objective
value is better than the current $\overline{p}^{\star}$, it becomes
the new incumbent solution, and $\overline{p}^{\star}$ is updated
accordingly. The subregion is now fully explored, or \emph{fathomed},
and is not branched further. 
\item \textbf{Possibility 3: branching.} If neither of the first two possibilities
occurs, the current subregion may still contain the global optimum
and must be explored further. It is partitioned into smaller subregions
by performing \emph{branching}. The algorithm typically prioritizes
\emph{combinatorial branching} on a binary variable that has a fractional
value in the convex relaxation's solution. If all binary variables
are integer-valued, then it performs \emph{spatial branching} by selecting
a continuous variable involved in a violated nonlinear constraint.
These new, smaller subregions are then added to the list of active
nodes to be processed. 
\end{itemize}
This process is repeated, continually updating the global bounds.
The global upper bound, $\overline{p}^{\star}$, is always the objective
value of the best-known feasible solution i.e., the incumbent. The
global lower bound, $\underline{p}^{\star}$ is updated by taking
the minimum of the best objective values of all the leaf nodes associated
with the subproblems that are pruned or fathomed. As the algorithm
explores the search space, the gap $\overline{p}^{\star}-\underline{p}^{\star}$
progressively narrows, and when the lower bound $\underline{p}^{\star}$
matches the upper bound $\overline{p}^{\star}$, the algorithm terminates,
returning the incumbent solution as a global optimum for the original
problem.

\paragraph{Why \eqref{eq:cont-NE-penal} is more tractable than \eqref{eq:miqcp-NE}.}

We now discuss three key reasons why the penalty formulation \eqref{eq:cont-NE-penal}
is more tractable than the mixed-integer formulation \eqref{eq:miqcp-NE}
within a spatial branch-and-bound framework.

First, the set of decision variables in \eqref{eq:cont-NE-penal},
$\{\delta,v\}$, is a strict subset of the variables in \eqref{eq:miqcp-NE},
which additionally includes the auxiliary binary variables $z$ and
slack variables $s$. A spatial branch-and-bound algorithm's performance
is highly sensitive to the number of variables on which it must branch.
By operating in a lower-dimensional space, the formulation \eqref{eq:cont-NE-penal}
inherently defines a smaller and less complex search tree, mitigating
the curse of dimensionality and reducing the overall computational
effort.

Second, in \eqref{eq:cont-NE-penal}, the algorithm only needs to
perform spatial branching, which occurs exclusively on the continuous
variables $\delta$ and $v$ to resolve violations in the polynomial
complementarity constraints. In contrast, solving \eqref{eq:miqcp-NE}
requires a hybrid branching scheme involving both combinatorial branching
on the binary variables and spatial branching on the continuous variables.
This dual-purpose branching complicates the node selection logic and
increases the number of potential branches at each node of the tree.

Third, the formulation \eqref{eq:cont-NE-penal} is exceptionally
well-suited for computing high-quality feasible solutions using nonlinear
interior-point methods \cite{byrd1999interior,waltz2006interior,wachter2002interior}.
In theory, when warm-started with a good initial point, such algorithms
converge quadratically to a locally optimal solution for nonlinear
programs, where all functions are twice continuously differentiable
\cite{WachterBiegler2005local,WachterBiegler2005global,byrd2000trust};
empirically, they often perform well even when the warm-start point
is not high-quality \cite{dasgupta2022BnBPEP}. Every function in
\eqref{eq:cont-NE-penal} is twice continuously differentiable, and
a basic feasible solution can be computed trivially. Furthermore,
because minimizing the objective of \eqref{eq:cont-NE-penal} to zero
corresponds to satisfying the underlying complementarity conditions,
modern interior-point solvers that include special interfaces for
complementarity constraints can handle the problem structure with
particular efficiency \cite{KnitroComplementarity,byrd2006k}. The
ability to quickly find a locally optimal solution is a significant
advantage, as this solution provides a strong upper bound that allows
the branch-and-bound algorithm to prune large portions of the search
tree, thereby drastically improving its overall performance. Furthermore,
such a locally optimal solution is likely to have a small objective
value, thus corresponding to an $\epsilon$-Nash equilibrium by Proposition \ref{prop:approx}.

\subsection{Customized spatial branch-and-bound}

The computational tractability of \eqref{eq:cont-NE-penal} allows
us to apply the following customized two-stage spatial branch-and-bound
algorithm that exploits the problem structure: 
\begin{itemize}
\item \textbf{Stage 1.} We first compute a locally optimal solution to \eqref{eq:cont-NE-penal}
using a nonlinear interior point method, as discussed in the previous section.
From Proposition \ref{prop:approx}, we know that the objective value of \eqref{eq:cont-NE-penal} for a feasible solution tells us how close the found solution is to a Nash equilibrium.
Because of this, a locally optimal solution with a small objective value acts as an
excellent warm-starting point for the spatial branch and bound algorithm. 
\item \textbf{Stage 2.} We then feed this locally optimal solution as a
warm-starting point to the spatial branch-and-bound algorithm. Since
many sub-regions of the branch-and-bound tree will likely not contain
any Nash equilibrium, their convex relaxations will often yield objective
values strictly greater than zero. Hence a small upper bound provided
by a locally optimal solution ensures these sub-regions are quickly
identified and eliminated, thus aggressively pruning the vast majority
of the search space, drastically reducing the number of nodes to explore
in turn and leading to a significant reduction in computation time. 
\end{itemize}
Our discussion above results in the following proposition: 

\begin{proposition} The two-stage spatial branch-and-bound algorithm
is a sound and complete algorithm for finding a~Nash equilibrium
in multiplayer games. If we choose to terminate the algorithm early,
it returns an approximate Nash equilibrium along with its corresponding
approximation parameter.\end{proposition}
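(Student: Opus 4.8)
The plan is to split the statement into three claims and reduce each to results already in hand: (i) \emph{soundness}, that a normal termination outputs an exact Nash equilibrium; (ii) \emph{completeness}, that the procedure halts after finitely many nodes with a correct answer; and (iii) the \emph{early-termination certificate}. The only nontrivial ingredient beyond the preceding propositions is the standard convergence theory of spatial branch-and-bound, which I will invoke as a black box.

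For soundness I would argue as follows. Spatial branch-and-bound maintains an incumbent that is always feasible for \eqref{eq:cont-NE-penal}, together with a lower bound $\underline{p}^{\star}$ that never exceeds the true optimum. On normal termination the optimality gap is zero, so the incumbent's objective value $\varpi$ satisfies $\varpi=\overline{p}^{\star}=\underline{p}^{\star}\le p^{\star}$, while $\varpi\ge 0$ because the objective of \eqref{eq:cont-NE-penal} is a maximum of absolute values. By Corollary~\ref{cor:opt_val_R} we have $p^{\star}=0$, hence $\varpi=0$, and the same corollary (equivalently Proposition~\ref{prop:approx} specialized to $\varpi=0$, combined with the equivalences in Propositions~\ref{prop:P_eq_Q} and~\ref{prop:Q_eq_R}) shows the incumbent is an exact Nash equilibrium. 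I would also stress that Stage~1 only supplies an initial incumbent/upper bound and cannot corrupt correctness: branch-and-bound overwrites the incumbent only with a strictly better feasible point, so the final incumbent is globally optimal irrespective of Stage~1's behavior, and if Stage~1 returns nothing usable, Stage~2 can be seeded with the uniform profile, which is trivially feasible.

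For completeness the key observations are that the feasible region of \eqref{eq:cont-NE-penal} is nonempty (a Nash equilibrium exists by \cite{nash1951non} and attains objective $0$) and compact — each $\delta_i(a_i)\in[0,1]$, each $v_i\in[\minu_i,\maxu_i]$, and $\varpi$ is bounded on this set because the complementarity products are continuous functions of bounded variables — and that every objective and constraint function is a polynomial. Under these hypotheses the textbook spatial branch-and-bound scheme with an exhaustive branching rule and consistent convex relaxations converges to $p^{\star}$ and processes only finitely many nodes before the optimality gap drops below any prescribed tolerance \cite{horst2013global,liberti2008introduction,locatelli2013global}; together with the soundness argument this returns a (tolerance-)exact Nash equilibrium. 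Stage~1 runs a nonlinear interior-point method to a prescribed tolerance or iteration cap and therefore does not affect termination of the overall procedure.

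Finally, for the early-termination guarantee I would invoke Proposition~\ref{prop:approx} directly: at any moment the algorithm holds an incumbent feasible for \eqref{eq:cont-NE-penal} (the Stage~1 point, a point found during Stage~2, or the uniform fallback) with objective value $\varpi=\overline{p}^{\star}$, and the proposition certifies that this profile is an $\epsilon$-Nash equilibrium with $\epsilon=\varpi\cdot\max_{i\in N}|A_i|$, which the algorithm reports alongside $\delta$. The step I expect to be the main obstacle is making the finite-termination claim precise: classical spatial branch-and-bound is in general only guaranteed to converge asymptotically to $p^{\star}$, and \emph{termination} is meant relative to a positive optimality-gap tolerance rather than at the exact optimum. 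This is, however, precisely the standard notion of completeness for global polynomial solvers, and for any fixed positive tolerance finiteness follows from compactness of the feasible set and consistency of the bounding operation; I would therefore state the guarantee at this level of generality rather than assert exact finite termination.
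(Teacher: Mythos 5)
Your proposal follows essentially the same route as the paper's own proof: soundness is obtained from Corollary~\ref{cor:opt_val_R} (with Proposition~\ref{prop:approx} covering the early-termination certificate), and completeness is delegated to the standard spatial branch-and-bound scheme via \cite{horst2013global}, noting that Stage~1 only supplies a warm start and cannot affect correctness. Your write-up is more detailed than the paper's — in particular the explicit caveat that finite termination is only guaranteed relative to a positive optimality-gap tolerance — but the underlying argument is the same.
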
 

\begin{proof} The algorithm\textquoteright s soundness follows from
Corollary~\ref{cor:opt_val_R} in the case of exact solutions, and
from Proposition~\ref{prop:approx} when terminated early. 
To see the completeness of our spatial branch-and-bound algorithm (i.e.
that it computes a solution if one exists), consider
the worst-case scenario, where the stage 1 is unable to provide a
point to warm-start the spatial branch-and-bound algorithm. In that
case, our customized spatial branch-and-bound algorithm reverts to
the standard spatial branch-and-bound algorithm, which is complete~\cite{horst2013global}.
\end{proof}

\section{Empirical Evaluation}

\input{figs/runtimes_complete}

We empirically analyze the performance and scalability of our proposed branch-and-bound method, focusing on (i) wall-clock computational time, (ii) equilibrium approximation quality. While prior work has evaluated algorithms on fully random games~\cite{ganzfried2024fast,zhang2023computing,berg2017exclusion}, we do so only in the appendix, as we believe the resulting performance reports are not sufficiently informative. Random games frequently admit sparse equilibria, which may render them more tractable than instances with more structured or adversarial properties. For example, although computing a Nash equilibrium in two-player general-sum games is PPAD-hard~\citep{chen2006settling}, games with appropriately chosen random payoff distributions can often be solved in expected polynomial time~\citep{barany2007nash}. More broadly, in multi-player games with i.i.d. payoffs, the existence of pure Nash equilibria is almost certain, with the probability approaching $1 - 1/e$ as action set sizes increase~\citep{goldberg1968probability,dresher1970probability}. Similarly, two-player Gaussian or uniform games typically admit equilibria supported on just two actions with high probability, specifically, $1 - O(1/\log n)$~\citep{barany2007nash}.

Interestingly, we observed that similar characteristics appear even in games generated by the widely used GAMUT library~\cite{nudelman2004run}, despite these games incorporating various structural constraints rather than being entirely random. To investigate this, we implemented a simple vectorized method in NumPy to identify pure strategy profiles with minimal $\epsilon$ and examined the following GAMUT game classes: BidirectionalLEG, CollaborationGame, CovariantGame, RandomGraphicalGame, and RandomLEG. For games with up to 6 players and 6 actions per player (where applicable), we generated 30 random instances for each setting. Our method found pure profiles with least exploitability for all games within 2 seconds. Among the tested classes, only RandomGraphicalGame consistently generated instances with non-trivial equilibria. For this reason, in this paper we chose to focus our experimental evaluation on this class, using three types of underlying graphs: CompleteGraph, RoadGraph, and SmallWorldGraph, with the graph parameter setting \texttt{-reflex ok 0}. All utilities were normalized to the $[0, 1]$ range.

We conducted all the experiments on a computer running Ubuntu 20.04.6 LTS on an Intel(R) Core(TM) i9-12900 CPU with 16 cores and 24 threads with 64 GB of RAM. We used \texttt{JuMP} \cite{Lubin2023}, a domain specific modeling language embedded in the open-source programming language \texttt{Julia} \cite{Julia-2017}, to model our optimization problems, and  our customized two-stage spatial branch-and-bound algorithm uses the solvers \texttt{KNITRO 14.2} \cite{byrd2006k} and \texttt{Gurobi 12.0} \cite{gurobi}. We compared our approach against three baselines: (1) solving formulation~\eqref{eq:miqcp-NE} directly using Gurobi, which in version 12 applies similar transformations as those described by~\citet{ganzfried2024fast}; (2) the CMR algorithm proposed and open-sourced by~\citet{zhang2023computing}; and (3) the incomplete ADIDAS algorithm introduced by~\citet{gemp2022adidas} and implemented in the OpenSpiel library~\cite{LanctotEtAl2019OpenSpiel} which neither of the complete methods previously compared to. Since the \eqref{eq:miqcp-NE} formulation solved with Gurobi and the CMR algorithm have already been shown to outperform classical methods such as those implemented in the Gambit library~\cite{savani2025gambit} and the exclusion method~\cite{berg2017exclusion}, we focused our comparisons on these more competitive baselines.

The computational results are shown in the left-hand plots of Figure~\ref{fig:res}, where the vertical axis is presented on a logarithmic scale. All computations were terminated after one hour if not completed, and such cases are indicated with this time limit in the plots. We refer to our spatial branch-and-bound algorithm as SBnB, and to its early-terminated variant, which stops once the approximation quality of ADIDAS is approximately matched or exceeded, as SBnB-e. Out of the 96 tested instances, formulation~\eqref{eq:miqcp-NE} timed out in 77 cases, and CMR in 76 cases, solving 19.8\% and 20.8\% of the testing instances, respectively. In contrast, both variants of our algorithm, as well as ADIDAS, successfully computed an equilibrium for all instances. With the exception of several instances in the RoadGraph class, where CMR occasionally produced better results, SBnB and its early-terminated variant consistently outperformed the other complete methods. Notably, SBnB-e achieved not only faster runtimes than the incomplete ADIDAS algorithm on smaller instances but also comparable runtimes on larger games with 6 players and 5 or 6 actions.

The right-hand plots in Figure~\ref{fig:res} show the equilibrium approximation quality, measured as the $\epsilon$ of the computed solutions -- a minimal value satisfying Equations~\eqref{eps-nash}.\footnote{Note that the $\epsilon$ reported by default in the OpenSpiel implementation of ADIDAS corresponds to the average player-wise $\epsilon$, rather than the maximum deviation from best response.} For Gurobi-based methods (formulation~\eqref{eq:miqcp-NE} and CMR) the resulting $\epsilon$ values are on the order of Gurobi's numerical precision, typically around $10^{-5}$ to $10^{-6}$. SBnB consistently produces highly accurate solutions, in some cases achieving several orders of magnitude greater precision than alternative methods. Interestingly, SBnB-e performs comparably to SBnB on smaller instances, offering significant improvements over ADIDAS in both accuracy and runtime, and matches its solution quality on larger games.

We also considered the observation by~\citet{sandholm2005mixed} that adding a welfare-maximizing or support-minimizing objective can accelerate solver-based methods. While we observed a speedup on random games, it had no consistent effect on the instances reported here. 

\section{Conclusion}
We proposed solving Nash equilibria in multiplayer normal-form games via a polynomial complementarity problem, with a sound and complete spatial branch-and-bound algorithm. Our method offers a principled, scalable, and robust alternative to classical approaches, avoiding numerical instability while providing formal solution guarantees. Though challenges remain for very large games, our results indicate that general-purpose equilibrium computation is more tractable than previously thought, and our algorithm shows promise for integration into broader oracle-based solvers.

\section*{Acknowledgments}
This research was supported by the Office of Naval Research awards N00014-22-1-2530 and N00014-23-1-2374, and the National Science Foundation awards IIS-2147361 and IIS-2238960. We thank Mo\"{i}se Blanchard for his assistance with pure Nash equilibrium computation vectorization.

\bibliography{aaai2026}

\newpage
\onecolumn
\appendix
\section{Reproducibility Checklist}
\begin{enumerate}
    \item This paper:
    \begin{itemize}
        \item Includes a conceptual outline and/or pseudocode description of AI methods introduced (yes)
        \item Clearly delineates statements that are opinions, hypotheses, and speculations from objective facts and results (yes)
        \item Provides well-marked pedagogical references for less-familiar readers to gain background necessary to replicate the paper (yes)
    \end{itemize}

    \item Does this paper make theoretical contributions? (yes)

    If yes, please complete the list below.
    \begin{itemize}
        \item All assumptions and restrictions are stated clearly and formally. (yes)
        \item All novel claims are stated formally (e.g., in theorem statements). (yes)
        \item Proofs of all novel claims are included. (yes)
        \item Proof sketches or intuitions are given for complex and/or novel results. (yes)
        \item Appropriate citations to theoretical tools used are given. (yes)
        \item All theoretical claims are demonstrated empirically to hold. (yes)
        \item All experimental code used to eliminate or disprove claims is included. (NA)
    \end{itemize}

    \item Does this paper rely on one or more datasets? (yes)

    If yes, please complete the list below.
    \begin{itemize}
        \item A motivation is given for why the experiments are conducted on the selected datasets. (yes)
        \item All novel datasets introduced in this paper are included in a data appendix. (NA)
        \item All novel datasets introduced in this paper will be made publicly available upon publication with a license that allows free usage for research purposes. (NA)
        \item All datasets drawn from the existing literature are accompanied by appropriate citations. (yes)
        \item All datasets drawn from the existing literature are publicly available. (yes)
        \item All datasets that are not publicly available are described in detail, with an explanation of why publicly available alternatives are not scientifically sufficient. (/NA)
    \end{itemize}

    \item Does this paper include computational experiments? (yes)

    If yes, please complete the list below.
    \begin{itemize}
        \item States the number and range of values tried per (hyper-)parameter during development, along with the criterion for selecting final settings. (NA)
        \item Any code required for pre-processing data is included in the appendix. (yes)
        \item All source code for conducting and analyzing experiments is included in a code appendix. (yes)
        \item All source code will be made publicly available upon publication with a license that allows free research usage. (yes)
        \item All code implementing new methods includes comments referencing the paper sections each step originates from. (no)
        \item If algorithms depend on randomness, the method used for setting seeds is described sufficiently for replication. (yes)
        \item Specifies computing infrastructure (hardware/software), including GPU/CPU models, memory, OS, and library versions. (yes)
        \item Formally describes evaluation metrics and justifies their selection. (yes)
        \item States the number of algorithm runs used for each reported result. (NA)
        \item Analyzes experimental results beyond averages (e.g., includes variation, confidence intervals, etc.). (NA)
        \item Assesses performance changes using appropriate statistical tests (e.g., Wilcoxon signed-rank). (NA)
        \item Lists all final (hyper-)parameters used for each model/algorithm in the experiments. (NA)
    \end{itemize}
\end{enumerate}

\newpage
\section{Including the Objective}
As noted previously, we considered the observation by~\citet{sandholm2005mixed}, who found, somewhat counterintuitively, that incorporating a welfare-maximizing or support-minimizing objective into the formulation can substantially enhance the performance of solver-based algorithms in two-player general-sum games. While we observed runtime improvements on randomly generated instances, our findings on the GAMUT benchmark games were less conclusive. In particular, although the CMR algorithm exhibited a modest speedup with the additional objective on the RoadGraph and select smaller instances of SmallWorldGraph, its inclusion in the formulation~\eqref{eq:miqcp-NE} in some cases led to a noticeable slowdown. The results are included below, where \texttt{+o} denotes a formulation with a welfare-maximizing objective.
\def\bracescoord{0.008}
\def\labelcoord{0.005}
\def\bracescoordapp{0.000000000011}
\def\labelcoordapp{0.0000000000044}
\def\xscale{1.3}
\def\yscale{1.0}
\def\btwspace{.3cm}
\begin{figure*}[h!]
\centering
\begin{tikzpicture}
\begin{semilogyaxis}[
    title={Runtimes for CompleteGraph},
    ylabel={Time (s)},
    width=\fpeval{\xscale*\axisdefaultwidth},
    height=\fpeval{\yscale*\axisdefaultheight},
    xtick={1,2,3,4,5,6,7,8,9,10,11,12,13,14,15,16,17,18,19,20,21,22,23,24,25,26,27,28,29,30},
    xticklabels={2,3,4,6,8,10,3,6,7,1,2,3,4,10,3,4,5,7,8,1,6,7,3,4,6,8,4,5,6,7},
    legend pos=south east,
    legend style={font=\footnotesize},
    legend columns=1,
    ymin=0.02,ymax=10000,
    ymajorgrids=true,
    grid style=dashed,
    xticklabel style={font=\tiny},
    clip=false,  
    enlargelimits=false,
    xmin=0,xmax=31
]
\addplot[blue, thick, dashed] coordinates {(1,3600) (2,2093) (3,3600) (4,3600) (5,3601) (6,3600) (7,3600) (8,3600) (9,3600) (10,3600) (11,3600) (12,3600) (13,3600) (14,3600) (15,3600) (16,3601) (17,3600) (18,3600) (19,3600) (20,3600) (21,3600) (22,3600) (23,3600) (24,3600) (25,3601) (26,3600) (27,3600) (28,3600) (29,3601) (30,3600)};
\addlegendentry{\eqref{eq:miqcp-NE}+o}

\addplot[blue, thick] coordinates {(1,6) (2,3147) (3,1) (4,364) (5,9) (6,1) (7,3600) (8,3600) (9,3600) (10,3600) (11,3600) (12,3600) (13,3600) (14,3600) (15,3600) (16,3600) (17,3601) (18,3600) (19,2) (20,3600) (21,3600) (22,3600) (23,3600) (24,3600) (25,3600) (26,3600) (27,3600) (28,3600) (29,3601) (30,3600)};
\addlegendentry{\eqref{eq:miqcp-NE}}

\addplot[red, thick, dashed] coordinates {(1,30) (2,0) (3,12) (4,2) (5,10) (6,6) (7,298) (8,3600) (9,3600) (10,3600) (11,3600) (12,3600) (13,3600) (14,3601) (15,168) (16,141) (17,3600) (18,117) (19,1887) (20,3601) (21,3600) (22,3600) (23,3600) (24,3600) (25,3600) (26,3601) (27,3600) (28,3600) (29,3600) (30,3600)};
\addlegendentry{CMR+o}

\addplot[red, thick] coordinates {(1,5) (2,3600) (3,15) (4,209) (5,1) (6,13) (7,3600) (8,3600) (9,3600) (10,3600) (11,3600) (12,3600) (13,3600) (14,3600) (15,3600) (16,41) (17,3601) (18,3600) (19,3600) (20,3600) (21,3600) (22,3600) (23,3600) (24,3600) (25,3600) (26,3601) (27,3600) (28,3600) (29,3600) (30,3600)};
\addlegendentry{CMR}


\draw[decorate, decoration={brace, mirror, amplitude=5pt}, thick]
  (axis  cs:0.7, \bracescoord) -- (axis  cs:6.3, \bracescoord);

\draw[decorate, decoration={brace, mirror, amplitude=5pt}, thick]
  (axis  cs:6.7, \bracescoord) -- (axis  cs:7.3, \bracescoord);

\draw[decorate, decoration={brace, mirror, amplitude=5pt}, thick]
  (axis  cs:7.7, \bracescoord) -- (axis  cs:9.3, \bracescoord);

\draw[decorate, decoration={brace, mirror, amplitude=5pt}, thick]
  (axis  cs:9.7, \bracescoord) -- (axis  cs:14.3, \bracescoord);

\draw[decorate, decoration={brace, mirror, amplitude=5pt}, thick]
  (axis  cs:14.7, \bracescoord) -- (axis  cs:19.3, \bracescoord);

\draw[decorate, decoration={brace, mirror, amplitude=5pt}, thick]
  (axis  cs:19.7, \bracescoord) -- (axis  cs:22.3, \bracescoord);

\draw[decorate, decoration={brace, mirror, amplitude=5pt}, thick]
  (axis  cs:22.7, \bracescoord) -- (axis  cs:26.3, \bracescoord);

\draw[decorate, decoration={brace, mirror, amplitude=5pt}, thick]
  (axis  cs:26.7, \bracescoord) -- (axis  cs:30.3, \bracescoord);

\node[anchor=north, font=\footnotesize] at (axis cs:3.5, \labelcoord) {5/3};
\node[anchor=north, font=\footnotesize] at (axis cs:6.6, \labelcoord) {5/4};
\node[anchor=north, font=\footnotesize] at (axis cs:9, \labelcoord) {5/5};
\node[anchor=north, font=\footnotesize] at (axis cs:12, \labelcoord) {5/6};
\node[anchor=north, font=\footnotesize] at (axis cs:17, \labelcoord) {6/3};
\node[anchor=north, font=\footnotesize] at (axis cs:21, \labelcoord) {6/4};
\node[anchor=north, font=\footnotesize] at (axis cs:24.5, \labelcoord) {6/5};
\node[anchor=north, font=\footnotesize] at (axis cs:28.5, \labelcoord) {6/6};

\end{semilogyaxis}
\end{tikzpicture}

\vspace{1cm}

\begin{tikzpicture}
\begin{semilogyaxis}[
    title={Runtimes for RoadGraph},
    ylabel={Time (s)},
    width=\fpeval{\xscale*\axisdefaultwidth},
    height=\fpeval{\yscale*\axisdefaultheight},
    xtick={1,2,3,4,5,6,7,8,9,10,11,12,13,14,15,16,17,18,19,20,21,22,23,24,25,26,27,28,29,30},
    xticklabels={1,2,4,5,10,2,8,1,3,5,7,3,5,4,5,6,1,3,9,10,2,3,4,8,3,4,5,6,8,9},
    legend pos=south east,
    legend style={font=\small},
    legend columns=1,
    ymin=0.02,ymax=10000,
    ymajorgrids=true,
    grid style=dashed,
    xticklabel style={font=\tiny},
    clip=false,  
    enlargelimits=false,
    xmin=0,xmax=31
]
\addplot[blue, thick, dashed] coordinates {(1,42) (2,32) (3,1) (4,66) (5,275) (6,3600) (7,3601) (8,3600) (9,3600) (10,3600) (11,3600) (12,3600) (13,3600) (14,3600) (15,3600) (16,3600) (17,3601) (18,3600) (19,3600) (20,3600) (21,3600) (22,3600) (23,3600) (24,3601) (25,3600) (26,3600) (27,3601) (28,3600) (29,3600) (30,3601)};
\addlegendentry{\eqref{eq:miqcp-NE}+o}

\addplot[blue, thick] coordinates {(1,13) (2,1) (3,1) (4,1) (5,1) (6,3600) (7,1) (8,3600) (9,3600) (10,3600) (11,3600) (12,3600) (13,3600) (14,2) (15,3600) (16,3600) (17,3600) (18,3600) (19,3600) (20,3) (21,3600) (22,3600) (23,3600) (24,3600) (25,3600) (26,3600) (27,3601) (28,3600) (29,3600) (30,3601)};
\addlegendentry{\eqref{eq:miqcp-NE}}

\addplot[red, thick, dashed] coordinates {(1,2) (2,0) (3,1) (4,0) (5,1) (6,1) (7,3) (8,7) (9,9) (10,7) (11,182) (12,921) (13,147) (14,2) (15,5) (16,6) (17,107) (18,240) (19,333) (20,49) (21,1540) (22,627) (23,3600) (24,580) (25,2520) (26,3601) (27,3600) (28,3600) (29,3600) (30,3600)};
\addlegendentry{CMR+o}

\addplot[red, thick] coordinates {(1,1) (2,1) (3,0) (4,1) (5,0) (6,1556) (7,2) (8,3600) (9,3600) (10,1926) (11,1324) (12,3600) (13,3600) (14,11) (15,2) (16,3) (17,3600) (18,3600) (19,3600) (20,3075) (21,3600) (22,3600) (23,3600) (24,3600) (25,3601) (26,3600) (27,3600) (28,3601) (29,59) (30,3600)};
\addlegendentry{CMR}

\draw[decorate, decoration={brace, mirror, amplitude=5pt}, thick]
  (axis  cs:0.7, \bracescoord) -- (axis  cs:5.3, \bracescoord);

\draw[decorate, decoration={brace, mirror, amplitude=5pt}, thick]
  (axis  cs:5.7, \bracescoord) -- (axis  cs:7.3, \bracescoord);

\draw[decorate, decoration={brace, mirror, amplitude=5pt}, thick]
  (axis  cs:7.7, \bracescoord) -- (axis  cs:11.3, \bracescoord);

\draw[decorate, decoration={brace, mirror, amplitude=5pt}, thick]
  (axis  cs:11.7, \bracescoord) -- (axis  cs:13.3, \bracescoord);

\draw[decorate, decoration={brace, mirror, amplitude=5pt}, thick]
  (axis  cs:13.7, \bracescoord) -- (axis  cs:16.3, \bracescoord);

\draw[decorate, decoration={brace, mirror, amplitude=5pt}, thick]
  (axis  cs:16.7, \bracescoord) -- (axis  cs:20.3, \bracescoord);

\draw[decorate, decoration={brace, mirror, amplitude=5pt}, thick]
  (axis  cs:20.7, \bracescoord) -- (axis  cs:24.3, \bracescoord);

\draw[decorate, decoration={brace, mirror, amplitude=5pt}, thick]
  (axis  cs:24.7, \bracescoord) -- (axis  cs:30.3, \bracescoord);

\node[anchor=north, font=\footnotesize] at (axis cs:3, \labelcoord) {5/3};
\node[anchor=north, font=\footnotesize] at (axis cs:6.5, \labelcoord) {5/4};
\node[anchor=north, font=\footnotesize] at (axis cs:9.5, \labelcoord) {5/5};
\node[anchor=north, font=\footnotesize] at (axis cs:12.5, \labelcoord) {5/6};
\node[anchor=north, font=\footnotesize] at (axis cs:15, \labelcoord) {6/3};
\node[anchor=north, font=\footnotesize] at (axis cs:18.5, \labelcoord) {6/4};
\node[anchor=north, font=\footnotesize] at (axis cs:22.5, \labelcoord) {6/5};
\node[anchor=north, font=\footnotesize] at (axis cs:27.5, \labelcoord) {6/6};

\end{semilogyaxis}
\end{tikzpicture}
\end{figure*}

\begin{figure*}[h!]
\centering
\begin{tikzpicture}
\begin{semilogyaxis}[
    title={Runtimes for SmallWorldGraph},
    width=\fpeval{\xscale*\axisdefaultwidth},
    height=\fpeval{\yscale*\axisdefaultheight},
    ylabel={Time (s)},
    xtick={1,2,3,4,5,6,7,8,9,10,11,12,13,14,15,16,17,18,19,20,21,22,23,24,25,26,27,28,29,30,31,32,33,34,35,36},
    xticklabels={6,14,21,24,26,27,28,29,30,5,11,12,17,20,21,22,27,1,8,9,12,14,15,21,22,25,1,2,3,7,12,17,20,22,24,28},
    legend pos=south east,
    legend style={font=\small},
    legend columns=1,
    ymin=0.02,ymax=10000,
    ymajorgrids=true,
    grid style=dashed,
    xticklabel style={font=\tiny},
    clip=false,  
    enlargelimits=false,
    xmin=0,xmax=37
]
\addplot[blue, thick, dashed] coordinates {(1,3600) (2,3600) (3,3600) (4,3600) (5,3600) (6,3600) (7,3600) (8,3601) (9,3600) (10,3600) (11,3600) (12,3600) (13,3600) (14,3600) (15,3600) (16,3600) (17,3601) (18,3600) (19,3600) (20,3601) (21,3600) (22,3600) (23,3600) (24,3601) (25,3600) (26,3600) (27,3600) (28,3600) (29,3600) (30,3600) (31,3600) (32,3601) (33,3600) (34,3601) (35,3600) (36,3601)};
\addlegendentry{\eqref{eq:miqcp-NE}+o}

\addplot[blue, thick] coordinates {(1,1) (2,3600) (3,2262) (4,1) (5,3600) (6,3600) (7,3600) (8,2) (9,3600) (10,3600) (11,3600) (12,3600) (13,3600) (14,3600) (15,3600) (16,3600) (17,3600) (18,3600) (19,3600) (20,3600) (21,3600) (22,3600) (23,3600) (24,3600) (25,3600) (26,3600) (27,3600) (28,3601) (29,3600) (30,3600) (31,3600) (32,3601) (33,3600) (34,3600) (35,3601) (36,3601)};
\addlegendentry{\eqref{eq:miqcp-NE}}

\addplot[red, thick, dashed] coordinates {(1,87) (2,1047) (3,42) (4,13) (5,2308) (6,4) (7,574) (8,26) (9,106) (10,3600) (11,3601) (12,3600) (13,3600) (14,3600) (15,3600) (16,3601) (17,3600) (18,3600) (19,3600) (20,3600) (21,3601) (22,3600) (23,3600) (24,3600) (25,3601) (26,3600) (27,3601) (28,3600) (29,3601) (30,3600) (31,3600) (32,3600) (33,3600) (34,3601) (35,3600) (36,3600)};
\addlegendentry{CMR+o}

\addplot[red, thick] coordinates {(1,3600) (2,3600) (3,3600) (4,3600) (5,3601) (6,3600) (7,3600) (8,3600) (9,3600) (10,3601) (11,3600) (12,3600) (13,3600) (14,3601) (15,3600) (16,3600) (17,3600) (18,3600) (19,3601) (20,3600) (21,3600) (22,3601) (23,3600) (24,3600) (25,3601) (26,3600) (27,3601) (28,3600) (29,3601) (30,3600) (31,3600) (32,3600) (33,3601) (34,3600) (35,3600) (36,3600)};
\addlegendentry{CMR}

\draw[decorate, decoration={brace, mirror, amplitude=5pt}, thick]
  (axis  cs:0.7, \bracescoord) -- (axis  cs:9.3, \bracescoord);

\draw[decorate, decoration={brace, mirror, amplitude=5pt}, thick]
  (axis  cs:9.7, \bracescoord) -- (axis  cs:17.3, \bracescoord);

\draw[decorate, decoration={brace, mirror, amplitude=5pt}, thick]
  (axis  cs:17.7, \bracescoord) -- (axis  cs:26.3, \bracescoord);

\draw[decorate, decoration={brace, mirror, amplitude=5pt}, thick]
  (axis  cs:26.7, \bracescoord) -- (axis  cs:36.3, \bracescoord);

\node[anchor=north, font=\footnotesize] at (axis cs:5, \labelcoord) {6/3};
\node[anchor=north, font=\footnotesize] at (axis cs:13.5, \labelcoord) {6/4};
\node[anchor=north, font=\footnotesize] at (axis cs:22, \labelcoord) {6/5};
\node[anchor=north, font=\footnotesize] at (axis cs:31.5, \labelcoord) {6/6};

\end{semilogyaxis}
\end{tikzpicture}

\end{figure*}

\newpage
\section{Performance on Random Games}
\def\bracescoord{0.008}
\def\labelcoord{0.005}
\def\bracescoordapp{0.000000000011}
\def\labelcoordapp{0.0000000000044}
\def\xscale{1.3}
\def\yscale{1.0}
\def\btwspace{.3cm}
\begin{figure*}[h!]
\centering
\begin{tikzpicture}
\begin{semilogyaxis}[
    title={Runtimes for random games},
    ylabel={Time (s)},
    width=\fpeval{\xscale*\axisdefaultwidth},
    height=\fpeval{\yscale*\axisdefaultheight},
    xtick={1,2,3,4,5,6,7,8,9,10,11,12,13,14,15,16,17,18,19,20,21,22,23,24,25,26,27,28,29,30,31,32,33,34,35,36,37,38,39,40,41,42,43,44,45,46,47,48,49,50,51,52,53,54,55,56,57,58,59,60,61,62,63,64,65,66,67,68,69,70,71,72,73,74,75,76,77,78,79,80,81,82,83,84,85,86,87,88,89,90,91,92,93,94,95,96,97,98,99,100,101,102,103,104,105,106,107,108,109,110,111,112,113,114,115,116,117,118,119,120,121,122,123,124,125,126,127,128,129,130,131,132,133,134,135,136,137,138,139,140,141,142,143,144,145,146,147,148,149,150,151,152,153,154,155,156,157,158,159,160},
    xticklabels={},
    legend pos=south east,
    legend style={font=\footnotesize},
    legend columns=1,
    ymin=0.02,ymax=10000,
    ymajorgrids=true,
    grid style=dashed,
    xticklabel style={font=\tiny},
    clip=false,  
    enlargelimits=false,
    xmin=50,xmax=162
]
\addplot[blue, thick, dashed] coordinates {(51,3) (52,2) (53,2) (54,2) (55,3) (56,2) (57,2) (58,2) (59,2) (60,2) (61,555) (62,558) (63,557) (64,554) (65,551) (66,553) (67,552) (68,555) (69,554) (70,552) (71,3600) (72,3600) (73,3600) (74,3600) (75,3600) (76,3601) (77,3600) (78,3600) (79,3600) (80,3600) (81,4) (82,5) (83,5) (84,5) (85,6) (86,6) (87,6) (88,6) (89,6) (90,5) (91,3600) (92,3600) (93,3600) (94,3600) (95,3600) (96,3601) (97,3600) (98,3600) (99,3600) (100,3600) (101,3600) (102,3600) (103,3600) (104,3600) (105,3600) (106,3600) (107,3600) (108,3600) (109,3600) (110,3600) (111,3600) (112,3600) (113,3600) (114,3600) (115,3600) (116,3600) (117,3600) (118,3600) (119,3600) (120,3600) (121,12) (122,16) (123,16) (124,16) (125,16) (126,16) (127,16) (128,17) (129,16) (130,17) (131,3600) (132,3601) (133,3600) (134,3600) (135,3600) (136,3600) (137,3601) (138,3600) (139,3600) (140,3600) (141,3600) (142,3600) (143,3600) (144,3600) (145,3600) (146,3600) (147,3600) (148,3600) (149,3600) (150,3600) (151,3600) (152,3600) (153,3600) (154,3600) (155,3601) (156,3600) (157,3601) (158,3600) (159,3601) (160,3600)};
\addlegendentry{\eqref{eq:miqcp-NE}-o}

\addplot[blue, thick] coordinates {
(51,1) (52,1) (53,1) (54,1) (55,1) (56,1) (57,1) (58,1) (59,2) (60,1) (61,3600) (62,3600) (63,3600) (64,3600) (65,3600) (66,3600) (67,3600) (68,3600) (69,3600) (70,3600) (71,3600) (72,3600) (73,3600) (74,3600) (75,3600) (76,3600) (77,3600) (78,3600) (79,3600) (80,3600) (81,8) (82,10) (83,10) (84,9) (85,9) (86,9) (87,11) (88,11) (89,11) (90,8) (91,3600) (92,3600) (93,3600) (94,3600) (95,3600) (96,3600) (97,3600) (98,3600) (99,3600) (100,3600) (101,3600) (102,3600) (103,3600) (104,3600) (105,3600) (106,3600) (107,3600) (108,3600) (109,3600) (110,3600) (111,3600) (112,3600) (113,3600) (114,3600) (115,3600) (116,3600) (117,3600) (118,3600) (119,3600) (120,3600) (121,3600) (122,3600) (123,3600) (124,3601) (125,3600) (126,3600) (127,3600) (128,3600) (129,3600) (130,3600) (131,3600) (132,3600) (133,3600) (134,3600) (135,3600) (136,3600) (137,3600) (138,3600) (139,3600) (140,3600) (141,3600) (142,3600) (143,3600) (144,3600) (145,3600) (146,3600) (147,3600) (148,3600) (149,3600) (150,3600) (151,3601) (152,3601) (153,3600) (154,3600) (155,3600) (156,3600) (157,3600) (158,3600) (159,3600) (160,3600)};
\addlegendentry{CMR}

\addplot[red, thick, dashed] coordinates {
(51,6) (52,6) (53,6) (54,5) (55,6) (56,5) (57,6) (58,5) (59,5) (60,5) (61,7) (62,6) (63,6) (64,6) (65,6) (66,6) (67,6) (68,7) (69,6) (70,6) (71,6) (72,6) (73,7) (74,7) (75,6) (76,7) (77,6) (78,7) (79,6) (80,7) (81,7) (82,7) (83,6) (84,6) (85,6) (86,7) (87,6) (88,6) (89,6) (90,6) (91,7) (92,7) (93,7) (94,7) (95,7) (96,7) (97,7) (98,7) (99,7) (100,7) (101,8) (102,8) (103,8) (104,7) (105,8) (106,8) (107,7) (108,8) (109,8) (110,7) (111,8) (112,8) (113,9) (114,8) (115,9) (116,8) (117,8) (118,9) (119,8) (120,9) (121,7) (122,7) (123,8) (124,7) (125,8) (126,7) (127,7) (128,8) (129,7) (130,8) (131,8) (132,8) (133,8) (134,9) (135,8) (136,8) (137,8) (138,9) (139,8) (140,9) (141,10) (142,9) (143,10) (144,9) (145,10) (146,9) (147,9) (148,10) (149,9) (150,9) (151,11) (152,11) (153,11) (154,11) (155,11) (156,12) (157,11) (158,11) (159,11) (160,11)};
\addlegendentry{ADIDAS}

\addplot[black, thick,dashed] coordinates {
(51,0.168393) (52,0.168002) (53,0.168312) (54,0.168189) (55,0.168075) (56,0.168979) (57,0.169133) (58,0.168151) (59,0.1684) (60,0.168778) (61,1.00395) (62,0.893413) (63,0.906446) (64,0.940936) (65,0.890867) (66,0.889964) (67,0.999594) (68,0.903363) (69,0.888775) (70,1.01522) (71,1.57223) (72,1.54937) (73,1.55008) (74,1.5516) (75,1.55068) (76,1.55596) (77,1.55492) (78,1.55369) (79,1.55363) (80,1.55069) (81,0.1253) (82,0.125221) (83,0.125055) (84,0.13074) (85,0.125168) (86,0.125207) (87,0.124911) (88,0.125677) (89,0.12568) (90,0.126663) (91,0.540566) (92,0.54174) (93,0.541161) (94,0.541201) (95,0.545978) (96,0.542279) (97,0.541122) (98,0.540894) (99,0.541886) (100,0.542358) (101,4.29663) (102,4.31626) (103,4.29294) (104,4.30375) (105,4.29241) (106,4.30055) (107,4.29858) (108,4.29397) (109,4.29938) (110,4.28962) (111,5.41146) (112,5.42835) (113,5.42464) (114,5.41725) (115,5.43384) (116,5.42751) (117,5.43636) (118,5.4206) (119,5.43119) (120,5.43229) (121,0.918798) (122,0.915868) (123,0.914866) (124,0.915326) (125,0.916965) (126,0.916524) (127,0.914062) (128,0.914809) (129,0.916462) (130,0.915977) (131,7.68134) (132,7.69354) (133,7.70165) (134,7.69514) (135,7.67589) (136,7.68635) (137,7.70425) (138,7.70569) (139,7.68407) (140,7.68247) (141,52.4573) (142,52.7014) (143,52.5925) (144,52.7824) (145,52.6637) (146,52.7327) (147,52.7553) (148,52.6206) (149,52.6639) (150,52.6893) (151,237.4) (152,237.459) (153,237.331) (154,237.573) (155,237.684) (156,237.379) (157,237.721) (158,237.449) (159,237.25) (160,237.465)};
\addlegendentry{SBnB}

\addplot[black, thick] coordinates {
(51,0.041461) (52,0.041439) (53,0.04188) (54,0.041738) (55,0.041902) (56,0.041361) (57,0.04164) (58,0.041426) (59,0.04165) (60,0.041704) (61,0.173897) (62,0.174338) (63,0.173578) (64,0.173221) (65,0.174052) (66,0.173909) (67,0.173817) (68,0.174142) (69,0.17398) (70,0.173389) (71,0.317045) (72,0.31819) (73,0.3174) (74,0.316545) (75,0.3181) (76,0.316511) (77,0.3167) (78,0.316246) (79,0.317077) (80,0.316906) (81,0.092067) (82,0.091685) (83,0.091852) (84,0.093086) (85,0.092079) (86,0.091804) (87,0.091996) (88,0.091827) (89,0.092191) (90,0.091847) (91,0.286201) (92,0.284875) (93,0.28468) (94,0.28553) (95,0.284616) (96,0.284873) (97,0.285028) (98,0.284861) (99,0.284794) (100,0.285274) (101,2.26599) (102,2.27209) (103,2.26664) (104,2.2686) (105,2.26752) (106,2.27786) (107,2.26973) (108,2.27231) (109,2.26562) (110,2.27555) (111,3.9277) (112,3.92097) (113,3.9287) (114,3.92062) (115,3.91993) (116,3.92599) (117,3.92459) (118,3.90917) (119,3.91509) (120,3.92038) (121,0.739865) (122,0.739207) (123,0.74088) (124,0.739972) (125,0.738274) (126,0.740787) (127,0.737915) (128,0.741338) (129,0.740117) (130,0.739513) (131,3.93216) (132,3.93204) (133,3.90741) (134,3.93769) (135,3.93706) (136,3.93462) (137,3.9312) (138,3.92392) (139,3.93229) (140,3.93659) (141,18.9423) (142,18.898) (143,18.9113) (144,18.9072) (145,18.911) (146,18.94) (147,18.9072) (148,18.914) (149,18.8954) (150,18.911) (151,62.0214) (152,61.998) (153,62.3778) (154,62.4237) (155,62.4715) (156,62.0005) (157,62.4226) (158,62.0352) (159,61.975) (160,62.3406)};
\addlegendentry{SBnB-e}


\draw[decorate, decoration={brace, mirror, amplitude=5pt}, thick]
  (axis  cs:51.7, \bracescoord) -- (axis  cs:60.3, \bracescoord);

\draw[decorate, decoration={brace, mirror, amplitude=5pt}, thick]
  (axis  cs:61.7, \bracescoord) -- (axis  cs:70.3, \bracescoord);

\draw[decorate, decoration={brace, mirror, amplitude=5pt}, thick]
  (axis  cs:71.7, \bracescoord) -- (axis  cs:80.3, \bracescoord);

\draw[decorate, decoration={brace, mirror, amplitude=5pt}, thick]
  (axis  cs:81.7, \bracescoord) -- (axis  cs:90.3, \bracescoord);

\draw[decorate, decoration={brace, mirror, amplitude=5pt}, thick]
  (axis  cs:91.7, \bracescoord) -- (axis  cs:100.3, \bracescoord);

\draw[decorate, decoration={brace, mirror, amplitude=5pt}, thick]
  (axis  cs:101.7, \bracescoord) -- (axis  cs:110.3, \bracescoord);

\draw[decorate, decoration={brace, mirror, amplitude=5pt}, thick]
  (axis  cs:111.7, \bracescoord) -- (axis  cs:120.3, \bracescoord);

\draw[decorate, decoration={brace, mirror, amplitude=5pt}, thick]
  (axis  cs:121.7, \bracescoord) -- (axis  cs:130.3, \bracescoord);

\draw[decorate, decoration={brace, mirror, amplitude=5pt}, thick]
  (axis  cs:131.7, \bracescoord) -- (axis  cs:140.3, \bracescoord);

\draw[decorate, decoration={brace, mirror, amplitude=5pt}, thick]
  (axis  cs:141.7, \bracescoord) -- (axis  cs:150.3, \bracescoord);

\draw[decorate, decoration={brace, mirror, amplitude=5pt}, thick]
  (axis  cs:151.7, \bracescoord) -- (axis  cs:160.3, \bracescoord);

\node[anchor=north, font=\footnotesize] at (axis cs:55.5, \labelcoord) {4/4};
\node[anchor=north, font=\footnotesize] at (axis cs:65.5, \labelcoord) {4/5};
\node[anchor=north, font=\footnotesize] at (axis cs:75.5, \labelcoord) {4/6};
\node[anchor=north, font=\footnotesize] at (axis cs:85.5, \labelcoord) {5/3};
\node[anchor=north, font=\footnotesize] at (axis cs:95.5, \labelcoord) {5/4};
\node[anchor=north, font=\footnotesize] at (axis cs:105.5, \labelcoord) {5/5};
\node[anchor=north, font=\footnotesize] at (axis cs:115.5, \labelcoord) {5/6};
\node[anchor=north, font=\footnotesize] at (axis cs:125.5, \labelcoord) {6/3};
\node[anchor=north, font=\footnotesize] at (axis cs:135.5, \labelcoord) {6/4};
\node[anchor=north, font=\footnotesize] at (axis cs:145.5, \labelcoord) {6/5};
\node[anchor=north, font=\footnotesize] at (axis cs:155.5, \labelcoord) {6/6};

\end{semilogyaxis}
\end{tikzpicture}
\end{figure*}
\def\bracescoord{0.008}
\def\labelcoord{0.005}
\def\bracescoordapp{0.000000000011}
\def\labelcoordapp{0.0000000000044}
\def\xscale{1.3}
\def\yscale{1.0}
\def\btwspace{.3cm}
\begin{figure*}[h!]
\centering
\begin{tikzpicture}
\begin{semilogyaxis}[
    title={Approximation quality for random games},
    ylabel={Time (s)},
    width=\fpeval{\xscale*\axisdefaultwidth},
    height=\fpeval{\yscale*\axisdefaultheight},
    xtick={1,2,3,4,5,6,7,8,9,10,11,12,13,14,15,16,17,18,19,20,21,22,23,24,25,26,27,28,29,30,31,32,33,34,35,36,37,38,39,40,41,42,43,44,45,46,47,48,49,50,51,52,53,54,55,56,57,58,59,60,61,62,63,64,65,66,67,68,69,70,71,72,73,74,75,76,77,78,79,80,81,82,83,84,85,86,87,88,89,90,91,92,93,94,95,96,97,98,99,100,101,102,103,104,105,106,107,108,109,110,111,112,113,114,115,116,117,118,119,120,121,122,123,124,125,126,127,128,129,130,131,132,133,134,135,136,137,138,139,140,141,142,143,144,145,146,147,148,149,150,151,152,153,154,155,156,157,158,159,160},
    xticklabels={},
    legend pos=south east,
    legend style={font=\footnotesize},
    legend columns=1,
    ymin=5e-11,ymax=7e-1,
    ymajorgrids=true,
    grid style=dashed,
    xticklabel style={font=\tiny},
    clip=false,  
    enlargelimits=false,
    xmin=50,xmax=162
]

\addplot[red, thick, dashed] coordinates {
(51,5.558583e-02) (52,5.558583e-02) (53,5.558583e-02) (54,5.558583e-02) (55,5.558583e-02) (56,5.558583e-02) (57,5.558583e-02) (58,5.558583e-02) (59,5.558583e-02) (60,5.558583e-02) (61,3.853612e-02) (62,3.853612e-02) (63,3.853612e-02) (64,3.853612e-02) (65,3.853612e-02) (66,3.853612e-02) (67,3.853612e-02) (68,3.853612e-02) (69,3.853612e-02) (70,3.853612e-02) (71,5.385407e-02) (72,5.385407e-02) (73,5.385407e-02) (74,5.385407e-02) (75,5.385407e-02) (76,5.385407e-02) (77,5.385407e-02) (78,5.385407e-02) (79,5.385407e-02) (80,5.385407e-02) (81,3.633724e-02) (82,3.633724e-02) (83,3.633724e-02) (84,3.633724e-02) (85,3.633724e-02) (86,3.633724e-02) (87,3.633724e-02) (88,3.633724e-02) (89,3.633724e-02) (90,3.633724e-02) (91,4.831044e-02) (92,4.831044e-02) (93,4.831044e-02) (94,4.831044e-02) (95,4.831044e-02) (96,4.831044e-02) (97,4.831044e-02) (98,4.831044e-02) (99,4.831044e-02) (100,4.831044e-02) (101,1.754719e-02) (102,1.754719e-02) (103,1.754719e-02) (104,1.754719e-02) (105,1.754719e-02) (106,1.754719e-02) (107,1.754719e-02) (108,1.754719e-02) (109,1.754719e-02) (110,1.754719e-02) (111,1.269740e-02) (112,1.269740e-02) (113,1.269740e-02) (114,1.269740e-02) (115,1.269740e-02) (116,1.269740e-02) (117,1.269740e-02) (118,1.269740e-02) (119,1.269740e-02) (120,1.269740e-02) (121,5.209181e-02) (122,5.209181e-02) (123,5.209181e-02) (124,5.209181e-02) (125,5.209181e-02) (126,5.209181e-02) (127,5.209181e-02) (128,5.209181e-02) (129,5.209181e-02) (130,5.209181e-02) (131,1.934691e-02) (132,1.934691e-02) (133,1.934691e-02) (134,1.934691e-02) (135,1.934691e-02) (136,1.934691e-02) (137,1.934691e-02) (138,1.934691e-02) (139,1.934691e-02) (140,1.934691e-02) (141,6.213119e-03) (142,6.213119e-03) (143,6.213119e-03) (144,6.213119e-03) (145,6.213119e-03) (146,6.213119e-03) (147,6.213119e-03) (148,6.213119e-03) (149,6.213119e-03) (150,6.213119e-03) (151,7.846471e-03) (152,7.846471e-03) (153,7.846471e-03) (154,7.846471e-03) (155,7.846471e-03) (156,7.846471e-03) (157,7.846471e-03) (158,7.846471e-03) (159,7.846471e-03) (160,7.846471e-03)};
\addlegendentry{ADIDAS}

\addplot[black, thick,dashed] coordinates {
(51,1.50917e-10) (52,1.50917e-10) (53,1.50917e-10) (54,1.50917e-10) (55,1.50917e-10) (56,1.50917e-10) (57,1.50917e-10) (58,1.50917e-10) (59,1.50917e-10) (60,1.50917e-10) (61,7.83114e-7) (62,7.83114e-7) (63,7.83114e-7) (64,7.83114e-7) (65,7.83114e-7) (66,7.83114e-7) (67,7.83114e-7) (68,7.83114e-7) (69,7.83114e-7) (70,7.83114e-7) (71,1.56483e-7) (72,1.56483e-7) (73,1.56483e-7) (74,1.56483e-7) (75,1.56483e-7) (76,1.56483e-7) (77,1.56483e-7) (78,1.56483e-7) (79,1.56483e-7) (80,1.56483e-7) (81,1.40417e-7) (82,1.40417e-7) (83,1.40417e-7) (84,1.40417e-7) (85,1.40417e-7) (86,1.40417e-7) (87,1.40417e-7) (88,1.40417e-7) (89,1.40417e-7) (90,1.40417e-7) (91,1.19683e-10) (92,1.19683e-10) (93,1.19683e-10) (94,1.19683e-10) (95,1.19683e-10) (96,1.19683e-10) (97,1.19683e-10) (98,1.19683e-10) (99,1.19683e-10) (100,1.19683e-10) (101,5.16617e-10) (102,5.16617e-10) (103,5.16617e-10) (104,5.16617e-10) (105,5.16617e-10) (106,5.16617e-10) (107,5.16617e-10) (108,5.16617e-10) (109,5.16617e-10) (110,5.16617e-10) (111,6.42416e-9) (112,6.42416e-9) (113,6.42416e-9) (114,6.42416e-9) (115,6.42416e-9) (116,6.42416e-9) (117,6.42416e-9) (118,6.42416e-9) (119,6.42416e-9) (120,6.42416e-9) (121,3.06925e-9) (122,3.06925e-9) (123,3.06925e-9) (124,3.06925e-9) (125,3.06925e-9) (126,3.06925e-9) (127,3.06925e-9) (128,3.06925e-9) (129,3.06925e-9) (130,3.06925e-9) (131,2.43004e-9) (132,2.43004e-9) (133,2.43004e-9) (134,2.43004e-9) (135,2.43004e-9) (136,2.43004e-9) (137,2.43004e-9) (138,2.43004e-9) (139,2.43004e-9) (140,2.43004e-9) (141,8.1876e-9) (142,8.1876e-9) (143,8.1876e-9) (144,8.1876e-9) (145,8.1876e-9) (146,8.1876e-9) (147,8.1876e-9) (148,8.1876e-9) (149,8.1876e-9) (150,8.1876e-9) (151,2.52657e-7) (152,2.52657e-7) (153,2.52657e-7) (154,2.52657e-7) (155,2.52657e-7) (156,2.52657e-7) (157,2.52657e-7) (158,2.52657e-7) (159,2.52657e-7) (160,2.52657e-7)};
\addlegendentry{SBnB}

\addplot[black, thick] coordinates {
(51,0.00230822) (52,0.00230822) (53,0.00230822) (54,0.00230822) (55,0.00230822) (56,0.00230822) (57,0.00230822) (58,0.00230822) (59,0.00230822) (60,0.00230822) (61,0.000777769) (62,0.000777769) (63,0.000777769) (64,0.000777769) (65,0.000777769) (66,0.000777769) (67,0.000777769) (68,0.000777769) (69,0.000777769) (70,0.000777769) (71,1.56483e-7) (72,1.56483e-7) (73,1.56483e-7) (74,1.56483e-7) (75,1.56483e-7) (76,1.56483e-7) (77,1.56483e-7) (78,1.56483e-7) (79,1.56483e-7) (80,1.56483e-7) (81,6.46975e-9) (82,6.46975e-9) (83,6.46975e-9) (84,6.46975e-9) (85,6.46975e-9) (86,6.46975e-9) (87,6.46975e-9) (88,6.46975e-9) (89,6.46975e-9) (90,6.46975e-9) (91,1.13756e-7) (92,1.13756e-7) (93,1.13756e-7) (94,1.13756e-7) (95,1.13756e-7) (96,1.13756e-7) (97,1.13756e-7) (98,1.13756e-7) (99,1.13756e-7) (100,1.13756e-7) (101,0.000429462) (102,0.000429462) (103,0.000429462) (104,0.000429462) (105,0.000429462) (106,0.000429462) (107,0.000429462) (108,0.000429462) (109,0.000429462) (110,0.000429462) (111,1.45855e-6) (112,1.45855e-6) (113,1.45855e-6) (114,1.45855e-6) (115,1.45855e-6) (116,1.45855e-6) (117,1.45855e-6) (118,1.45855e-6) (119,1.45855e-6) (120,1.45855e-6) (121,2.32675e-7) (122,2.32675e-7) (123,2.32675e-7) (124,2.32675e-7) (125,2.32675e-7) (126,2.32675e-7) (127,2.32675e-7) (128,2.32675e-7) (129,2.32675e-7) (130,2.32675e-7) (131,5.17897e-5) (132,5.17897e-5) (133,5.17897e-5) (134,5.17897e-5) (135,5.17897e-5) (136,5.17897e-5) (137,5.17897e-5) (138,5.17897e-5) (139,5.17897e-5) (140,5.17897e-5) (141,4.66961e-5) (142,4.66961e-5) (143,4.66961e-5) (144,4.66961e-5) (145,4.66961e-5) (146,4.66961e-5) (147,4.66961e-5) (148,4.66961e-5) (149,4.66961e-5) (150,4.66961e-5) (151,2.52657e-7) (152,2.52657e-7) (153,2.52657e-7) (154,2.52657e-7) (155,2.52657e-7) (156,2.52657e-7) (157,2.52657e-7) (158,2.52657e-7) (159,2.52657e-7) (160,2.52657e-7)};
\addlegendentry{SBnB-e}


\draw[decorate, decoration={brace, mirror, amplitude=5pt}, thick]
  (axis  cs:51.7, \bracescoordapp) -- (axis  cs:60.3, \bracescoordapp);

\draw[decorate, decoration={brace, mirror, amplitude=5pt}, thick]
  (axis  cs:61.7, \bracescoordapp) -- (axis  cs:70.3, \bracescoordapp);

\draw[decorate, decoration={brace, mirror, amplitude=5pt}, thick]
  (axis  cs:71.7, \bracescoordapp) -- (axis  cs:80.3, \bracescoordapp);

\draw[decorate, decoration={brace, mirror, amplitude=5pt}, thick]
  (axis  cs:81.7, \bracescoordapp) -- (axis  cs:90.3, \bracescoordapp);

\draw[decorate, decoration={brace, mirror, amplitude=5pt}, thick]
  (axis  cs:91.7, \bracescoordapp) -- (axis  cs:100.3, \bracescoordapp);

\draw[decorate, decoration={brace, mirror, amplitude=5pt}, thick]
  (axis  cs:101.7, \bracescoordapp) -- (axis  cs:110.3, \bracescoordapp);

\draw[decorate, decoration={brace, mirror, amplitude=5pt}, thick]
  (axis  cs:111.7, \bracescoordapp) -- (axis  cs:120.3, \bracescoordapp);

\draw[decorate, decoration={brace, mirror, amplitude=5pt}, thick]
  (axis  cs:121.7, \bracescoordapp) -- (axis  cs:130.3, \bracescoordapp);

\draw[decorate, decoration={brace, mirror, amplitude=5pt}, thick]
  (axis  cs:131.7, \bracescoordapp) -- (axis  cs:140.3, \bracescoordapp);

\draw[decorate, decoration={brace, mirror, amplitude=5pt}, thick]
  (axis  cs:141.7, \bracescoordapp) -- (axis  cs:150.3, \bracescoordapp);

\draw[decorate, decoration={brace, mirror, amplitude=5pt}, thick]
  (axis  cs:151.7, \bracescoordapp) -- (axis  cs:160.3, \bracescoordapp);

\node[anchor=north, font=\footnotesize] at (axis cs:55.5, \labelcoordapp) {4/4};
\node[anchor=north, font=\footnotesize] at (axis cs:65.5, \labelcoordapp) {4/5};
\node[anchor=north, font=\footnotesize] at (axis cs:75.5, \labelcoordapp) {4/6};
\node[anchor=north, font=\footnotesize] at (axis cs:85.5, \labelcoordapp) {5/3};
\node[anchor=north, font=\footnotesize] at (axis cs:95.5, \labelcoordapp) {5/4};
\node[anchor=north, font=\footnotesize] at (axis cs:105.5, \labelcoordapp) {5/5};
\node[anchor=north, font=\footnotesize] at (axis cs:115.5, \labelcoordapp) {5/6};
\node[anchor=north, font=\footnotesize] at (axis cs:125.5, \labelcoordapp) {6/3};
\node[anchor=north, font=\footnotesize] at (axis cs:135.5, \labelcoordapp) {6/4};
\node[anchor=north, font=\footnotesize] at (axis cs:145.5, \labelcoordapp) {6/5};
\node[anchor=north, font=\footnotesize] at (axis cs:155.5, \labelcoordapp) {6/6};

\end{semilogyaxis}
\end{tikzpicture}
\end{figure*}

\end{document}